\RequirePackage{ifpdf}
\ifpdf
 	\documentclass[a4paper,11pt,pdftex]{amsart}
 	\else
 	\documentclass[a4paper,11pt,dvips]{amsart}
\fi

\usepackage[all]{xy}
\ifpdf
\else
  \xyoption{ps}
  \xyoption{dvips}
\fi
\usepackage{ae,aecompl}
\usepackage[english]{babel}
\usepackage[utf8]{inputenc}
\usepackage{palatino}
\usepackage{enumerate} %For enumerating with letters etc.
\usepackage{amsthm}
\usepackage{amsmath}
\usepackage{amsfonts}
\usepackage{subfig}
\captionsetup[subfloat]{labelformat=empty,font=normalsize}
\usepackage{graphicx}

\graphicspath{{./figures/}{./xypics/}{./figures.eventrees/}}
\ifpdf
	\usepackage{epstopdf}		%Used to insert eps-figures in pdf files. 
	\DeclareGraphicsExtensions{.png,.jpg,.eps,.epsf}
\fi

%Definitions

\newcommand{\C}{\mathbf{C}}
\newcommand{\Cb}{\bar{\C}}
\newcommand{\R}{\mathbf{R}}

\newcommand{\sbraid}{\mathcal{H}}	%The Hurwitz braid group.

		%The symbol indicating an edge.
\newcommand{\junc}{u}		%Symbol representing a junction.
		%Alternative junction.
\newcommand{\vtex}{v}		%Vertex symbol.

\newcommand{\actA}{A}		%Action symbol 1
\newcommand{\actB}{B}		%Action symbol 2
\newcommand{\actE}{E}		%Action symbol 1 Even
		%Action symbol 1 R-symmetric

\newcommand{\defin}[1]{\emph{#1}}

\mathchardef\mhyp="2D

\newtheorem{lemma}{Lemma}
\newtheorem{theorem}[lemma]{Theorem}
\newtheorem{corollary}[lemma]{Corollary}
\newtheorem{definition}[lemma]{Definition}

\newtheorem{proposition}[lemma]{Proposition}

\title[On eigenvalues of the Schr\"odinger operator]{On eigenvalues of the Schr\"odinger operator with an even complex-valued polynomial potential}

\author[P.~Alexandersson]{Per Alexandersson}
\address{Department of Mathematics, Stockholm University, SE-106 91, Stockholm, Sweden}
\email{per@math.su.se}

\date{\today}

%%%%%%%%%%%%%%%%%%%%%%%%%%%%%%%%%%%%%%%%%%%%%%%%%%%%%%%%%%%%%%%%%%%%%%%%%%%%%%%

\begin{document}

\keywords{Nevanlinna functions, Schroedinger operator}
\subjclass[2000]{Primary 34M40, Secondary 34M03,30D35}

\begin{abstract}
In this paper, we generalize several results of the article ``Analytic continuation of eigenvalues of a quartic oscillator'' 
of A.~Eremenko and A.~Gabrielov.

We consider a family of eigenvalue problems for a Schr\"odinger equation with even polynomial potentials
of arbitrary degree $d$ with complex coefficients, and $k<(d+2)/2$ boundary conditions.
We show that the spectral determinant in this case consists of two components, containing
even and odd eigenvalues respectively.

In the case with $k=(d+2)/2$ boundary conditions,
we show that the corresponding parameter space consists of infinitely many connected components.
\end{abstract}

\maketitle

\tableofcontents
\section{Introduction}
We study the problem of analytic continuation of eigenvalues of the Schr\"odinger 
operator with an even complex-valued polynomial potential,
that is, analytic continuation of $\lambda=\lambda(\alpha)$ in the differential equation 
\begin{align}\label{eq:evenschroedinger}
-y'' + P_\alpha(z)y = \lambda y, 
\end{align}
where $\alpha=(\alpha_2,\alpha_4,\dots,\alpha_{d-2})$ and $P_\alpha(z)$ is the even polynomial
\begin{align*}
P_\alpha(z) = z^d + \alpha_{d-2}z^{d-2}+\dots+\alpha_2 z^2.
\end{align*}
The boundary conditions for \eqref{eq:evenschroedinger} are as follows:
Set $n = d+2$ and divide the plane into $n$ disjoint open sectors
$$S_j = \{z \in \C \setminus \{0\} : |\arg z - 2\pi j / n|<\pi/n \}, \quad j=0,1,2,\dots,n-1.$$
The index $j$ should be considered mod $n.$
These are the \defin{Stokes sectors} of the equation (\ref{eq:evenschroedinger}).
A solution $y$ of (\ref{eq:evenschroedinger}) satisfies $y(z)\rightarrow 0$ or $y(z)\rightarrow \infty$ 
as $z \rightarrow \infty$ along each ray from the origin in $S_j,$ see \cite{sibuya}. 
The solution $y$ is called \defin{subdominant} in the first case, 
and \defin{dominant} in the second case.

The main result of this paper is as follows:
\begin{theorem}\label{thm:mainone}
Let $\nu = d/2+1$ and let $J = \{j_1,j_2,\dots,j_{2m}\}$ with $j_{k+m} = j_{k}+\nu$ and $|j_p - j_q|>1$ for $p\neq q.$
Let $\Sigma$ be the set of all $(\alpha,\lambda) \in \C^\nu$ for which 
the equation $-y''+(P_\alpha-\lambda)y=0$ has a solution with
with the boundrary conditions
\begin{align}\label{eq:bddconditions}
y \text{ is } \text{ subdominant in } S_j \mbox{ for all } j\in J
\end{align}
where $P_{\mathbf{\alpha}}(z)$ is an even polynomial of degree $d.$
For $m<\nu/2,$ $\Sigma$ consists of two irreducible connected components.
For $m=\nu/2,$ which can only happen when $d \equiv 2 \mod 4,$ 
$\Sigma$ consists of infinitely many connected components, 
distinguished by the number of zeros of the corresponding solution to \eqref{eq:evenschroedinger}.
\end{theorem}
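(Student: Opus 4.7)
The plan rests on the $\Z/2$ symmetry of the problem. Since $P_\alpha(z) = P_\alpha(-z)$, the substitution $z \mapsto -z$ preserves the ODE; and because rotating by $\nu$ sectors is the same as rotating by $\pi$, this map also preserves the boundary condition set $J$ (thanks to the hypothesis $j_{k+m} = j_k+\nu$). Hence if $y(z)$ is an eigenfunction for $(\alpha,\lambda)\in\Sigma$, so is $y(-z)$.

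The first step is to exploit this symmetry by writing every eigenfunction as $y = y_+ + y_-$ with $y_\pm(z) = (y(z)\pm y(-z))/2$; both summands satisfy the ODE and the boundary conditions. One thus obtains a set-theoretic decomposition $\Sigma = \Sigma_+ \cup \Sigma_-$, where $\Sigma_\pm$ is the locus admitting a nonzero even, resp.\ odd, eigenfunction. These two loci are cut out by $y(0)=0$ and $y'(0)=0$ respectively, applied to the solution (unique up to scalar) that is subdominant in all $S_{j_k}$ for $k=1,\dots,m$; in particular they are closed analytic subvarieties and $\Sigma_+ \cap \Sigma_-$ is empty.

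For $m<\nu/2$, I would identify each of the two problems with a ``half-line'' Schr\"odinger problem on a contour from $0$ to infinity in some $S_{j_1}$, with $m$ subdominant conditions at infinity and a single Neumann (even case) or Dirichlet (odd case) condition at the origin. Because the antipodal pairing in $J$ means the $m$ conditions at the other ends are automatic once the parity is fixed, this reduced problem is of exactly the Eremenko--Gabrielov type, modulo the extra condition at $0$. I would then adapt their Nevanlinna parameterization / monodromy argument to show that the associated parameter space is parameterized by a connected cell complex and is therefore irreducible and connected. Combined with the parity, this yields precisely two components.

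For $m = \nu/2$ (which forces $\nu$ even, i.e.\ $d\equiv 2\pmod 4$), the reduced half-line problem has as many subdominant conditions as degrees of freedom, so for each fixed $\alpha$ the spectrum is discrete. The number of zeros of the eigenfunction on the chosen contour is then a locally constant, integer-valued function on $\Sigma$: zeros cannot escape to infinity because the solution is subdominant, and they cannot collide at the origin without violating the boundary condition there. I would show that every sufficiently large integer is realized as a zero count (by a suitable perturbative or scaling argument reducing to a known unperturbed situation) and that each stratum of constant zero count is connected, using again a Nevanlinna/monodromy description. The result is infinitely many connected components, indexed by the zero count.

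The main technical obstacle is the connectedness step in both parts: extending the Nevanlinna parameterization to accommodate the extra boundary condition at the origin, and checking that the relevant covering of parameter space has no further splitting beyond parity (for $m<\nu/2$) or beyond zero count (for $m=\nu/2$). Distinguishing components by parity or by zero count is conceptually straightforward once the symmetry reduction is in place; the real work is in the monodromy computation that bounds the number of components from above.
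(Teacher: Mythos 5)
Your high-level skeleton matches the paper's conclusions --- the parity dichotomy for $m<\nu/2$ and the zero count for $m=\nu/2$ are exactly the invariants that label the components --- but the proposal leaves unproved precisely the part that constitutes the entire content of the theorem, namely the upper bound on the number of components. You acknowledge this yourself (``the real work is in the monodromy computation''), but the route you sketch for that computation is problematic. Reducing to a half-line problem with a Dirichlet or Neumann condition at the origin takes you outside the Nevanlinna/cell-decomposition framework: that machinery classifies global meromorphic functions $f=y/y_1$ on $\C$ with polynomial Schwarzian by their pullback cell decompositions, and a boundary condition at a finite point has no natural place in that classification. The paper instead stays on the whole plane, chooses $f$ \emph{odd}, observes that the associated graph $\Gamma$ is centrally symmetric, and restricts the monodromy to the subgroup generated by the ``even'' braid actions $\actE_j=\actA_j\circ\actA_{j+\nu}$, which preserve central symmetry. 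The parity of the eigenfunction is then read off combinatorially (vertex at the center of $\Gamma$ versus a central double edge), and transitivity on each parity class is proved by an explicit combinatorial reduction of centrally symmetric graphs to a normal (``ivy'') form via these even actions. None of this is replaced by anything in your sketch, and it is not clear the half-line reduction can be made to produce it.

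Two further specific gaps. First, for $m<\nu/2$ you assert the reduced problem ``is parameterized by a connected cell complex and is therefore irreducible''; but connectedness of the parameter space $Y_0$ of asymptotic-value configurations only gives you a covering $\Sigma\to Y_0$, and the number of components equals the number of monodromy orbits on a fiber --- computing those orbits is the whole battle, and it requires the structure-moving lemmas (the analogues of Lemma~\ref{lemma:contracttwo}), not just connectedness of the base. Second, for $m=\nu/2$ the claim that the zero count is locally constant ``because zeros cannot escape to infinity since the solution is subdominant'' is not sound as stated: the solution is dominant in $\nu$ of the sectors, and a priori zeros could escape there. In the paper the invariance of the zero count under analytic continuation is a nontrivial combinatorial fact (Lemma~\ref{lemma:boundedfaces}: the number of bounded faces of $\Gamma$ is preserved by the squared braid actions when no two dominant faces are adjacent), and the connectedness of each stratum again comes from reducing to a normal form with at most two $Y$-structures. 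A minor slip besides: since every eigenfunction of the even potential is automatically even or odd (the subdominant solution in a sector is unique up to scalar), your decomposition $y=y_++y_-$ always has one vanishing summand, and your identification of which locus is cut out by $y(0)=0$ versus $y'(0)=0$ is reversed.
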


\subsection{Previous results}
The first study of analytic continuation of $\lambda$ in the complex $\beta$-plane 
for the problem
$$-y''+(\beta z^4+z^2)y=\lambda y, \quad y(-\infty)=y(\infty)=0$$
was done by Bender and Wu \cite{benderwu},
They discovered the connectivity of the sets of odd and even eigenvalues,
rigorous results was later proved in \cite{simon}.
 
In \cite{gabrielov}, the even quartic potential $P_a(z) = z^4 + a z^2$ and the boundary value problem
$$-y'' + (z^4 + a z^2)y = \lambda_a y, \quad y(\infty)=y(-\infty)=0$$
was considered.

The problem has discrete real spectrum for real $a,$ with $\lambda_1 < \lambda_2 < \dots \rightarrow +\infty.$
There are two families of eigenvalues, those with even index and those with odd. If $\lambda_j$ and $\lambda_k$ are two eigenvalues in the same family, then $\lambda_k$ can be obtained from $\lambda_j$ by analytic continuation in the complex $\alpha\mhyp$plane.
Similar results have been found for other potentials, 
such as the PT-symmetric cubic, where $P_\alpha(z) = (iz^3+i\alpha z),$ 
with $y(z)\rightarrow 0,$ as $z\rightarrow \pm \infty$ on the real line. See for example \cite{irreduc}.

\subsection{Acknowledgements}
The author would like to thank Andrei Gabrielov for the introduction to this area of research,
and for enlightening suggestions and improvements to the text. %Especially on last lemmas 
Great thanks to Boris Shapiro, my advisor. %Something here as well.

\section{Preliminaries on general theory of solutions to the Schroedinger equation}
We will review some properties for the Schr\"odinger equation with a general polynomial potential. 
In particular, these properties hold for an even polynomial potential.
These properties may also be found in \cite{gabrielov,pergabrielov}.

The general Schroedinger equation is given by
\begin{align}\label{eq:generalschroedinger}
-y'' + P_\alpha(z)y = \lambda y, 
\end{align}
where $\alpha=(\alpha_1,\alpha_2,\dots,\alpha_{d-1})$ and $P_\alpha(z)$ is the polynomial
\begin{align*}
P_\alpha(z) = z^d + \alpha_{d-1}z^{d-1}+\dots+\alpha_1 z.
\end{align*}
We have the associated Stokes sectors
$$S_j = \{z \in \C \setminus \{0\} : |\arg z - 2\pi j / n|<\pi/n \}, \quad j=0,1,2,\dots,n-1,$$
where $n=d+2,$ and index considered mod $n.$
The boundary conditions to \eqref{eq:generalschroedinger} are of the form
\begin{align}\label{eq:generalbddconditions}
y \mbox{ is subdominant in } S_{j_1},S_{j_2},\dots,S_{j_k}
\end{align}
with $|j_p-j_q|>1$ for all $p\neq q.$

Notice that any solution $y \neq 0$ of \eqref{eq:generalschroedinger} is an entire function,
and the ratio $f = y/y_1$ of any two linearly independent solutions  of \eqref{eq:generalschroedinger}
is a meromorphic function with the following properties, (see \cite{sibuya}).

\begin{enumerate}[(I)]
\item For any $j,$ there is a solution $y$ of (\ref{eq:generalschroedinger}) subdominant in the Stokes sector $S_j,$
where $y$ is unique up to multiplication by a non-zero constant.

\item For any Stokes sector $S_j$, we have $f(z) \rightarrow w\in\Cb$
as $z\to\infty$ along any ray in $S_j$. This value $w$ is called
\emph{the asymptotic value} of $f$ in $S_j$.

\item \label{list:different} 
For any $j$, the asymptotic values of $f$ in $S_j$ and
$S_{j+1}$ (index still taken modulo $n$) are distinct. 
Furthermore, $f$ has at least 3 distinct asymptotic values.
 
\item The asymptotic value of $f$ in $S_j$ is zero if and only if $y$ is subdominant in $S_j.$ 
We call such sector \defin{subdominant} for $f$ as well.
Note that the boundary conditions given in \eqref{eq:generalbddconditions} imply that sectors $S_{j_1},\dots,S_{j_k}$ 
are subdominant for $f$ when $y$ is an eigenfunction of \eqref{eq:generalschroedinger}, \eqref{eq:generalbddconditions}.

\item \label{list:unramified} $f$ does not have critical points, hence
$f:\mathbf{C}\rightarrow \Cb$ is unramified outside the
asymptotic values.
 
\item The Schwartzian derivative $S_f$ of $f$ given by
$$S_f = \frac{f'''}{f'}-\frac{3}{2}\left( \frac{f''}{f'} \right)^2$$
equals $-2(P_\alpha - \lambda).$ Therefore one can recover $P_\alpha$ and $\lambda$ from $f$.
\end{enumerate}
From now on, $f$ denotes the ratio of two linearly independent solutions of \eqref{eq:generalschroedinger}, \eqref{eq:generalbddconditions}.

\subsection{Cell decompositions}
As above, set $n = \deg P + 2$ where $P$ is our polynomial potential and assume that all non-zero asymptotic values of $f$ are distinct and finite.
Let $w_j$ be the asymptotic values of $f$ with an arbitrary 
ordering satisfying the only restriction that if $S_j$ is subdominant, then $w_j=0.$
One can denote by $w_j$ the asymptotic value in the Stokes sector $S_j,$
which will be called the \defin{standard order}, see section \ref{subsec:standardorder}.

Consider the cell decomposition $\Psi_0$ of $\Cb_w$ shown in Fig.~\ref{fig:curves}a.
It consists of closed directed loops $\gamma_j$ starting and ending at $\infty,$ where the index is considered mod $n,$ and $\gamma_j$ is defined only if $w_j\neq 0.$ The loops $\gamma_j$ only intersect at $\infty$ and have no self-intersection other than $\infty.$
Each loop $\gamma_j$ contains a single non-zero asymptotic value $w_j$ of $f.$
For example, for even $n,$ the boundary condition $y\rightarrow 0$ as $z\rightarrow \pm\infty$ for $z \in \R$ implies that $w_0=w_{n/2}=0,$
 so there are no loops $\gamma_0$ and $\gamma_{n/2}.$ 
We have a natural cyclic order of the asymptotic values, namely the order in which a small circle around $\infty$ traversed 
counterclockwise intersects the associated loops $\gamma_j,$ see Fig.~\ref{fig:curves}a.

We use the same index for the asymptotic values and the loops, so define
$$j_+ = j+k \text{ where } k\in\{1,2\} \text{ is the smallest integer such that } w_{j+k}\neq 0.$$
Thus, $\gamma_{j_+}$ is the loop around the next to $w_j$ (in the cyclic order mod $n$) non-zero asymptotic value. Similarly, $\gamma_{j_-}$ is the loop around the previous non-zero asymptotic value.
 
\begin{figure}
  \centering
    \subfloat[(a) $\Psi_0$]{\includegraphics[width=0.48\textwidth]{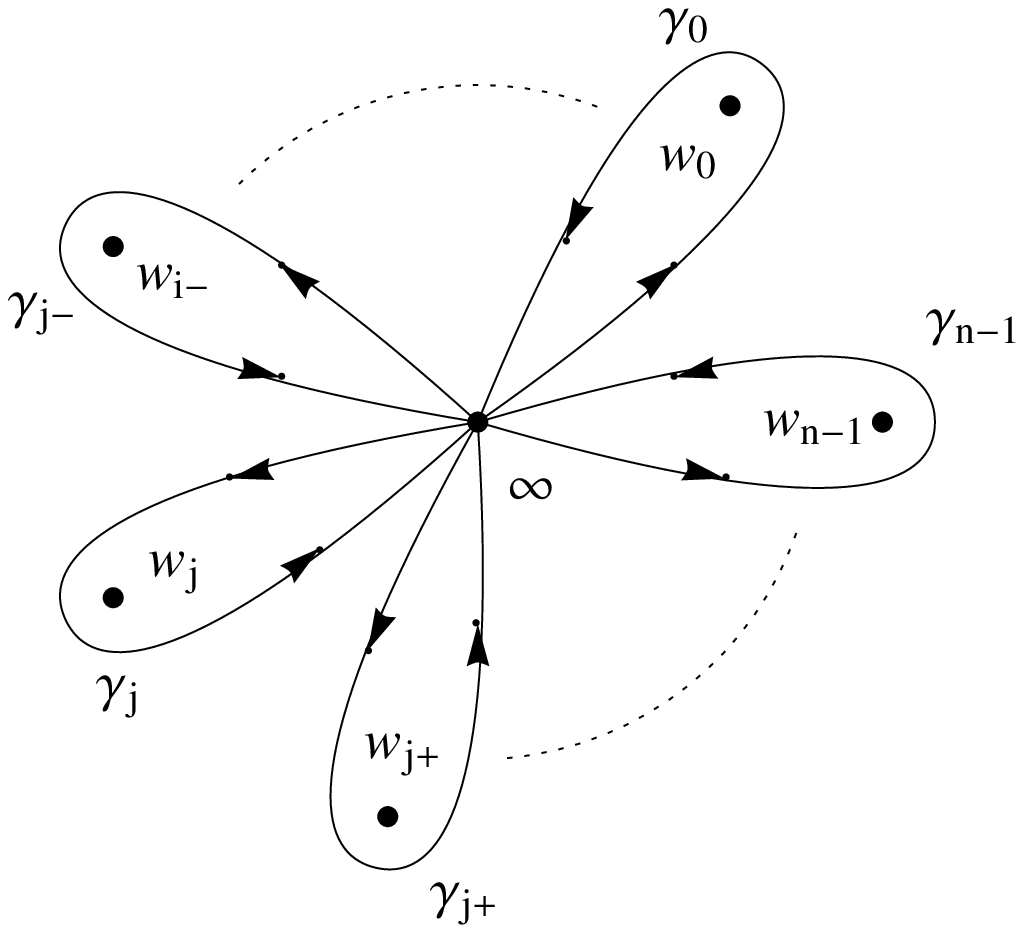}}
    \subfloat[(b) $\actA_j(\Psi_0).$]{\includegraphics[width=0.48\textwidth]{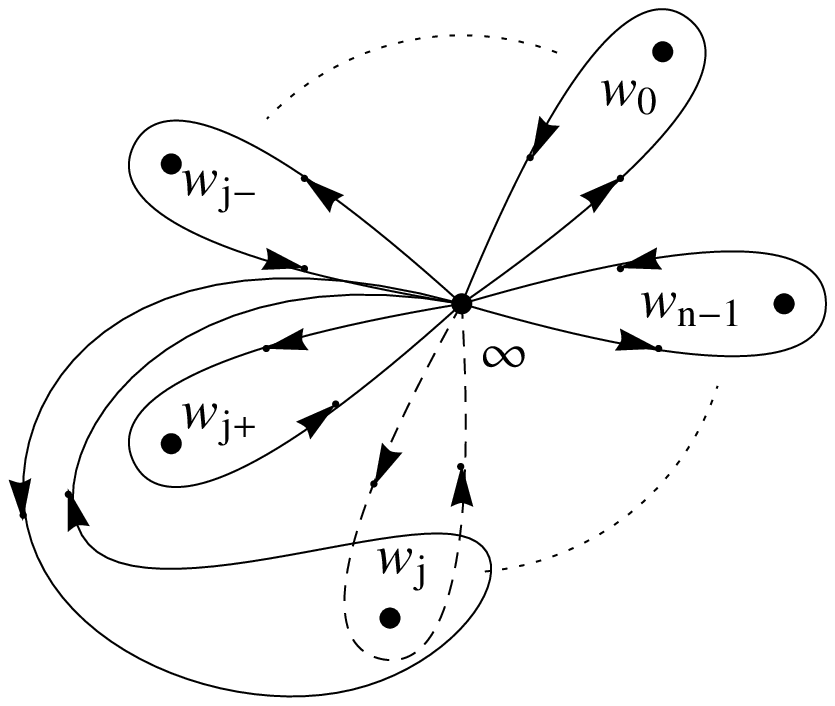}}
    \caption{Permuting $w_j$ and $w_{j_+}$ in $\Psi_0.$}\label{fig:curves}
\end{figure}

\subsection{From cell decompositions to graphs}\label{sec:graphprop}
Proofs of all statements in this subsection can be found in \cite{gabrielov}.
  
Given $f$ and $\Psi_0$ as above, consider the preimage $\Phi_0 = f^{-1}(\Psi_0).$ 
Then $\Phi_0$ gives a cell decomposition of the plane $\C_z.$ 
Its vertices are the poles of $f$ and the edges are preimages of the loops $\gamma_j.$ 
An edge that is a preimage of $\gamma_j$ is labeled by $j$ and called a $j\mhyp$edge.
The edges are directed, their orientation is induced from the orientation of the loops $\gamma_j$.  
Removing all loops of $\Phi_0,$ we obtain an infinite, directed planar graph $\Gamma,$ without loops. 
Vertices of $\Gamma$ are poles of $f,$ each bounded connected component of $\C\setminus \Gamma$ contains one simple zero of $f,$ and each zero of $f$ belongs to one such bounded connected component.
There are at most two edges of $\Gamma$ connecting any two of its vertices. 
Replacing each such pair of edges with a single undirected edge and making all other edges undirected, we obtain an undirected graph $T_\Gamma.$ It has no loops or multiple edges, and the transformation from $\Phi_0$ to $T_\Gamma$ can be uniquely reversed.

A \defin{junction} is a vertex of $\Gamma$ (and of $T_\Gamma$) at which the degree of $T_\Gamma$ is at least 3.
From now on, $\Gamma$ refers to both the directed graph without loops and the associated cell decomposition $\Phi_0$.

\subsection{The standard order of asymptotic values}\label{subsec:standardorder}
For a potential $P$ of degree $d,$ 
the graph $\Gamma$ has $n = d+2$ infinite branches and $n$ unbounded faces corresponding to the Stokes sectors of $P$.
We fixed earlier the ordering $w_0,w_1,\dots,w_{n-1}$ of the asymptotic values of $f.$
 
\emph{If} each $w_j$ is the asymptotic value in the sector $S_{j},$ we say that the asymptotic 
values have \defin{the standard order} and the corresponding cell decomposition $\Gamma$ is a \defin{standard graph}.

\begin{lemma}[See Prop. 6 \cite{gabrielov}]
If a cell decomposition $\Gamma$ is a standard graph, then the corresponding undirected graph $T_\Gamma$ is a tree.
\end{lemma}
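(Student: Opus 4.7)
The plan is to establish connectedness and acyclicity of $T_\Gamma$ separately, with the standard order entering only in the second step.

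For connectedness, I would use that $\Phi_0 = f^{-1}(\Psi_0)$ is a CW decomposition of the connected plane $\C_z$, so in particular its $1$-skeleton is connected. Passing from $\Phi_0$ to $T_\Gamma$ consists of (i) erasing the loops (preimages of $\gamma_j$'s that bound a disk containing no pole — these are attached to a single vertex, so erasing them preserves connectedness) and (ii) collapsing each pair of parallel directed edges between two poles into one undirected edge. Both operations preserve connectedness, so $T_\Gamma$ is connected.

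For acyclicity I would argue by contradiction. Assume $T_\Gamma$ contains a simple cycle $C$, bounding a Jordan domain $D\subset \C_z$. By the description of $\Gamma$ recalled in the excerpt, the bounded faces of $\Phi_0$ lying in $\overline{D}$ each contain one simple zero of $f$, and $\overline{D}$ contains at least one such face; hence $f$ has $N\ge 1$ zeros in $D$. The boundary $\partial D = C$ is a concatenation of edges labelled by some indices $j_1,\dots,j_r$, and its image $f(\partial D)$ is a closed curve in $\Cb_w$ made of sub-arcs of the loops $\gamma_{j_1},\dots,\gamma_{j_r}$ glued at $\infty$. By property (V), $f|_D$ has no critical points, so it is a local homeomorphism, and the winding number of $f(\partial D)$ about any point $w\in\Cb_w$ counts (with signs) the preimages of $w$ in $D$. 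In particular the winding about $w=0$ equals $N\ge 1$.

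The standard-order hypothesis enters now: the cyclic order of the labels $j_1,\dots,j_r$ along $\partial D$ induced by the planar embedding of $\Gamma$ agrees, via the correspondence $S_j \leftrightarrow w_j$, with the cyclic order of the arcs $\gamma_{j_1},\dots,\gamma_{j_r}$ around $\infty$ in $\Psi_0$. Using this matching one checks that the image curve $f(\partial D)$ can be contracted in $\Cb_w\setminus\{0\}$, so its winding about $w=0$ is $0$, contradicting $N\ge 1$. The main obstacle is precisely this bookkeeping: the standard order is unavoidable because for non-standard cell decompositions the cyclic orders at infinity in $\C_z$ and at $\infty\in\Cb_w$ disagree, cycles in $T_\Gamma$ do arise, and the winding-number cancellation fails. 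With the hypothesis in force, however, the topological cyclic-order matching forces $C$ not to exist, proving $T_\Gamma$ is a tree.
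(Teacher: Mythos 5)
Your connectedness argument is fine (the $1$-skeleton of a cell decomposition of the connected plane is connected, and deleting loops and merging parallel edges preserves this), but the acyclicity half has a genuine gap, and it sits exactly where the standard-order hypothesis has to do its work. First, the degree count is wrong: the vertices of $\Gamma$ are \emph{poles} of $f$, so the argument principle gives that the winding number of $f(\partial D)$ about $0$ is $N-P$, the number of zeros \emph{minus} the number of poles of $f$ in $D$; a simple cycle of $T_\Gamma$ may enclose vertices, so you cannot conclude this number is $N\ge 1$. (There is also the preliminary issue that $\partial D$ itself passes through poles, so the contour must be perturbed before the argument principle applies at all.) Second, and more seriously, $\Cb_w\setminus\{0\}$ is a disk, hence simply connected, so \emph{every} closed curve in it is contractible and ``winding about $0$'' of a curve through $\infty$ carries no information; to extract a nonzero invariant you would have to work in $\C\setminus\{0\}$ or $\Cb_w\setminus\{0,\infty\}$, but $f(\partial D)$ passes through $\infty$ because the boundary vertices are poles. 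The sentence ``using this matching one checks that the image curve can be contracted'' is therefore vacuous as stated, and once repaired it is precisely the whole content of the lemma — asserted, not proved. A symptom: nothing in your written argument actually fails when the order of the $w_j$ is non-standard, yet the conclusion is false in that case (Lemma~\ref{lemma:labelingprop}(II) records that bounded faces of $T_\Gamma$ do occur for non-standard orders), so an essential use of the hypothesis is missing.

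For comparison, the paper gives no proof of this lemma at all; it cites Proposition 6 of \cite{gabrielov}, where the argument is combinatorial rather than degree-theoretic. One uses the face-labeling properties of Lemma~\ref{lemma:labelingprop}: every edge of $\Gamma$ lies on the boundary of a dominant unbounded face carrying the same label, the edges on the boundary of a bounded face are directed clockwise with labels strictly increasing and each label occurring at most once, while for a standard graph the dominant faces are arranged at infinity in increasing \emph{counterclockwise} cyclic order. These two cyclic orders are incompatible in the plane, so no bounded face of $T_\Gamma$ can exist. If you want to keep a topological flavour, that incompatibility of the two cyclic orders is exactly the bookkeeping your last paragraph would have to carry out explicitly.
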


In the next section, we define some actions on $\Psi_0$ that permute non-zero asymptotic values.
Each unbounded face of $\Gamma$ (and $T_\Gamma$) will be labeled by the asymptotic value in the corresponding Stokes sector. 
For example, labeling an unbounded face corresponding to $S_k$ with $w_j$ or just with the index $j,$ 
indicates that $w_j$ is the asymptotic value in $S_k.$

From the definition of the loops $\gamma_j,$ a face corresponding to a dominant sector has the same label as any edge bounding that face.
The label in a face corresponding to a subdominant sector $S_k$ is always $k,$ since the actions defined below only permute non-zero asymptotic values.

An unbounded face of $\Gamma$ is called (sub)dominant if the corresponding Stokes sector is (sub)dominant.

%TODO: EXAMPLES HERE

% For example, in Fig.~\ref{fig:example_cell_complex}, the Stokes sectors $S_0$ and $S_3$ are subdominant since the corresponding faces have label $0.$ 
% We do not have the standard order for $\Gamma,$ since $w_2$ is the asymptotic value for $S_4,$ and $w_4$ is the asymptotic value for $S_2.$ The associated graph $T_\Gamma$ is not a tree.

\subsection{Properties of graphs and their face labeling}\label{subsec:labelingprop}\label{subsec:junctionoftype}
\begin{lemma}[See Section 3 in \cite{gabrielov}]\label{lemma:labelingprop}
Any graph $\Gamma$ have the following properties:

\begin{enumerate}[(I)]
\item Two bounded faces of $\Gamma$ cannot have a common edge, (since a $j\mhyp$edge is always at the boundary of an unbounded face labeled $j.$)

\item \label{list:numbering} The edges of a bounded face of a graph $\Gamma$ are directed clockwise, 
and their labels increase in that order. 
Therefore, a bounded face of $T_\Gamma$ can only appear if the order of $w_j$ is non-standard. 

\item \label{list:uniquenr} Each label appears at most once in the boundary of any bounded face of $\Gamma.$

\item The unbounded faces of $\Gamma$ adjacent to a junction $\junc,$ 
always have the labels cyclically increasing counterclockwise around $\junc.$

\item \label{list:labelexistone} The boundary of a dominant face labeled $j$ 
consists of infinitely many directed $j\mhyp$edges, oriented counterclockwise around the face.

\item \label{list:labelexisttwo} If $w_j = 0$ there are no $j\mhyp$edges.

\item Each vertex of $\Gamma$ has even degree, since each vertex in $\Phi_0 = f^{-1}(\Psi_0)$ has even degree, 
and removing loops to obtain $\Gamma$ preserves this property.
\end{enumerate}
\end{lemma}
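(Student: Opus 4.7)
The plan is to verify the seven items by lifting $\Psi_0$ through $f$, using two facts from the preliminaries: $f$ is unramified outside its asymptotic values, and under the standing assumption every nonzero $w_j$ is finite, so $\infty$ is a regular value and $f$ is a local homeomorphism at every pole.

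Items (VI) and (VII) come first. (VI) is tautological, since $\gamma_j$ is absent from $\Psi_0$ when $w_j=0$. For (VII), at a pole $v$ of $f$ the local homeomorphism onto a punctured neighborhood of $\infty$ identifies the valence of $v$ in $\Phi_0$ with the valence of $\infty$ in $\Psi_0$, which equals $2N$ where $N$ is the number of nonzero asymptotic values (each loop contributes two edge-ends at $\infty$); passage to $\Gamma$ by removing self-loops subtracts an even number from each vertex degree, preserving parity.

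Items (IV) and (V) follow from the local picture at infinity and at a junction. If $S_k$ is dominant with asymptotic value $w_j$, a sufficiently far truncation of $S_k$ is mapped into the Jordan disk bounded by $\gamma_j$, so the boundary of the corresponding unbounded face must consist of $j$-edges, counterclockwise-oriented because $\gamma_j$ is so oriented around $w_j$; this is (V). For (IV), near a junction $\junc$ the local homeomorphism at $\junc$ transports the cyclic pattern of unbounded faces at $\junc$ onto the cyclic pattern of loops at $\infty$ in $\Psi_0$, which by construction is the cyclic order of the asymptotic values.

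The bounded-face properties (I)--(III) are the most delicate and will be the main obstacle, since they require careful tracking of orientations. For (I), a $j$-edge is a preimage of $\gamma_j$ and has two sides, one mapping into the Jordan disk bounded by $\gamma_j$ and one mapping into its complement in $\bar{\C}_w$; the complement contains $\infty$, so the latter side lies in the unbounded $j$-labeled dominant face, and two bounded faces therefore cannot share a $j$-edge. For (II), a bounded face $F$ of $\Gamma$ decomposes in $\Phi_0$ into a central component mapping onto the outer region of $\Psi_0$ (which contains $0$, hence the unique zero of $f$ in $F$) together with finitely many components mapping onto the interior disks of individual $\gamma_j$'s, all joined across self-loops. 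The outer boundary of $F$ in $\Gamma$ is then a simple cycle of non-loop $j$-edges which maps onto the boundary of the outer region; since $f$ is orientation-preserving and that outer boundary winds clockwise around each interior disk visited, the induced orientation of $\partial F$ is clockwise, and the labels of its successive edges reproduce the cyclic order of the loops $\gamma_j$ at $\infty$, hence increase. Property (III) is then immediate: a strictly increasing cyclic sequence modulo $n$ cannot repeat any index.
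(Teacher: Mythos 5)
First, a remark on the comparison you asked for: the paper does not prove this lemma at all --- it is quoted verbatim from Section 3 of \cite{gabrielov} --- so you are supplying an argument where the paper gives only a citation. Most of your items ((VI), (VII), (V), and the skeleton of (II)--(III)) follow the standard Eremenko--Gabrielov line and are essentially sound, though your (IV) only establishes the cyclic order of the corners at a pole, not of the distinct unbounded faces adjacent to a junction.

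The genuine gap is in item (I), and it is not merely a slip of wording. You claim that, of the two sides of a $j$-edge, the one mapping into the complement of the disk $D_j$ bounded by $\gamma_j$ is the unbounded dominant face labeled $j$, ``because the complement contains $\infty$.'' Both halves of this are wrong: $\infty$ is the vertex of $\Psi_0$, lying on $\gamma_j$ itself and hence in the interior of no face; and a face of $\Phi_0$ is unbounded iff it corresponds to a Stokes sector, not iff its image comes close to $\infty$ (the preimages of $\infty$ are the poles of $f$, i.e.\ the vertices, and they are spread over the whole plane). Moreover the dominant face labeled $j$ maps \emph{into} $D_j$, not into its complement: it is the component of $f^{-1}(D_j)$ carrying the logarithmic singularity over $w_j$. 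The side of a $j$-edge mapping into the complement of $D_j$ lies in a component of the preimage of the outer face $D_0\ni 0$, and that component may perfectly well be bounded (it then contains a zero of $f$) --- these are exactly the bounded faces of $\Gamma$. What actually needs to be proved, and what your argument omits, is that every \emph{bounded} component of $f^{-1}(D_j)$ with $w_j\neq 0$ maps bijectively onto $D_j$ (it is a finite-degree cover of the once-punctured disk $D_j\setminus\{w_j\}$, completed by a single unramified preimage of $w_j$, hence of degree one) and is therefore bounded by a single edge running from a pole to itself, i.e.\ by a loop, which is deleted in passing from $\Phi_0$ to $\Gamma$. Only this shows that every surviving $j$-edge of $\Gamma$ has the unbounded dominant face labeled $j$ on its $D_j$-side, which is the parenthetical claim in (I); it is also exactly what legitimizes your description, in (II), of a bounded face of $\Gamma$ as a $D_0$-component glued to $D_j$-components ``across self-loops.'' As written, your (I) and your (II) are mutually inconsistent about which side of a $j$-edge the dominant face sits on.
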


Following the direction of the $j\mhyp$edges, the first vertex that is connected to an edge labeled ${j_+}$ is the vertex where the $j\mhyp$edges and the ${j_+}\mhyp$edges \defin{meet}. The last such vertex is where they \defin{separate}. These vertices, if they exist, must be junctions.
\begin{definition}
Let $\Gamma$ be a standard graph, and let $j \in \Gamma$ be a junction where the $j\mhyp$edges and $j_+\mhyp$edges separate.
Such junction is called a \defin{$j\mhyp$junction.}
\end{definition}
There can be at most one $j\mhyp$junction in $\Gamma,$ the existence of two or more such junctions would violate property (\ref{list:uniquenr}) of the face labeling. 
However, the same junction can be a $j\mhyp$junction for different values of $j.$

There are three different types of $j\mhyp$junctions, see Fig.~\ref{fig:junctiontype_case123}. 
\begin{figure}[ht!]
\centering
  \subfloat[(a)]{\includegraphics{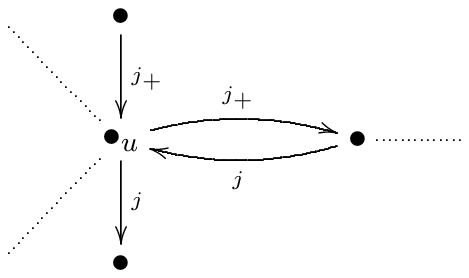}}
  \hfill
  \subfloat[(b)]{\includegraphics{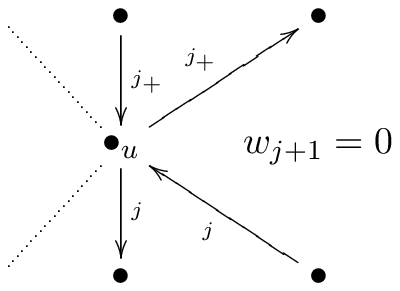}}
  \hfill
  \subfloat[(c)]{\includegraphics{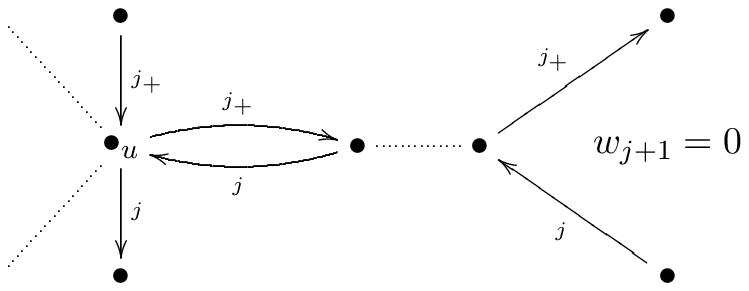}}
  \caption{Different types of $j\mhyp$junctions.}
  \label{fig:junctiontype_case123}
\end{figure}

Case (a) only appears when $w_{j+1}\neq 0.$ 
Cases (b) and (c) can only appear when $w_{j+1}=0.$ 
In (c), the $j\mhyp$edges and $j_+\mhyp$edges meet and separate at different junctions,
while in (b), this happens at the same junction.

\begin{definition}
Let $\Gamma$ be a standard graph with a $j\mhyp$junction $u$.
A \emph{structure} at the $j\mhyp$junction is the subgraph $\Xi$
of $\Gamma$ consisting of the following elements:
\begin{itemize}
\item The edges labeled $j$ that appear before $\junc$ following the $j\mhyp$edges.
\item The edges labeled $j_+$ that appear after $\junc$ following the $j_+\mhyp$edges.
\item All vertices the above edges are connected to.
\end{itemize}
If $\junc$ is as in Fig.~\ref{fig:junctiontype_case123}a, $\Xi$ is called an \defin{$I\mhyp$structure at the $j\mhyp$junction}.
If $\junc$ is as in Fig.~\ref{fig:junctiontype_case123}b,
$\Xi$ is called a \defin{$V\mhyp$structure at the $j\mhyp$junction}.
If $\junc$ is as in Fig.~\ref{fig:junctiontype_case123}c,
$\Xi$ is called a \defin{$Y\mhyp$structure at the $j\mhyp$junction}.
\end{definition}
Since there can be at most one $j\mhyp$junction, there can be at most one structure at the $j\mhyp$junction.

A graph $\Gamma$ shown in Fig.~\ref{fig:example_structs} has one (dotted) $I\mhyp$structure at the $1\mhyp$junction $v,$
one (dotted) $I\mhyp$structure at the $4\mhyp$junction $u,$
one (dashed) $V\mhyp$structure at the $2\mhyp$junction $v$ and one (dotdashed) $Y\mhyp$structure at the $5\mhyp$junction $u$.
\begin{figure}
\centering
\includegraphics[width=0.8\textwidth]{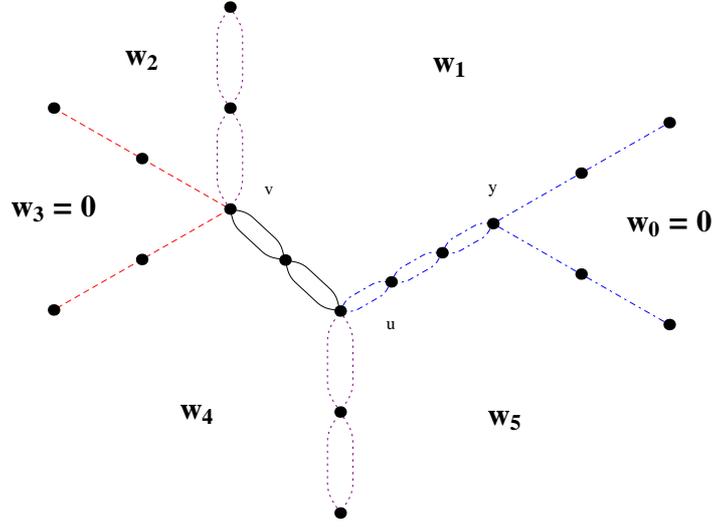}
\caption{Graph $\Gamma$ with (dotted) $I\mhyp$structures, a (dashed) $Y\mhyp$structure and a (dotdashed) $Y\mhyp$structure.}\label{fig:example_structs}
\end{figure}

Note that the $Y\mhyp$structure is the only kind of structure that contains an additional junction.
We refer to such additional junctions as \defin{$Y\mhyp$junctions}. 
For example, the junction marked $y$ in Fig.~\ref{fig:example_structs} is a $Y\mhyp$junction.

\subsection{Braid actions on graphs}\label{subsec:actiondef}

As in \cite{pergabrielov}, we define continuous deformations $\actA_j$ of the loops in Fig.~\ref{fig:curves}a,
such that the new loops are given in terms of the old ones by
$$
\actA_j(\gamma_k) =
\begin{cases}
\gamma_j\gamma_{j_+}\gamma_j^{-1} \text{ if } k=j\\
\gamma_j\text{ if } k=j_+ \\
\gamma_k \text{ otherwise }
\end{cases},
\quad
\actA_j^{-1}(\gamma_k) =
\begin{cases}
\gamma_{j_+} \text{ if } k=j \\
\gamma_{j_+}^{-1}\gamma_{j}\gamma_{j_+} \text{ if } k=j_+\\
\gamma_k \text{ otherwise }
\end{cases}
$$
These actions, together with their inverses, 
generate the Hurwitz (or sphere) braid group $\sbraid_m,$ 
where $m$ is the number of non-zero asymptotic values.
(For a definition of this group, see \cite{lando}.)
The action of the generators $\actA_j$ and $\actA_k$ commute if $|j-k|\geq 2.$

The property (\ref{list:unramified}) of the eigenfunctions implies 
that each $\actA_j$ induces a monodromy transformation of the cell decomposition $\Phi_0,$ and of the associated directed graph $\Gamma.$

\section{Properties of even actions on centrally symmetric graphs}\label{sec:action}

\subsection{Additional properties for even potential}

In addition to the previous properties for general polynomials,
these additional properties holds for even polynomial potentials $P$ (see \cite{gabrielov}).
From now until the end of the article, $\nu = (\deg(P) + 2)/2.$

Each solution $y$ of \eqref{eq:evenschroedinger}
is either even or odd and we may choose $y$ and $y_1$ such that $f=y/y_1$ is odd.

If the asymptotic values $w_0,w_1,\dots,w_{2\nu-1}$ are ordered in the standard order, 
we have that $w_j = -w_{j+\nu}.$

We may choose the loops centrally symmetric in Fig.~\ref{fig:curves}a 
which implies that $\Phi_0$ and $\Gamma$ are centrally symmetric.

\subsection{Even braid actions}
Define the \emph{even actions} $\actE_j$ as $\actE_j = \actA_j \circ \actA_{j+\nu}.$ 

Assume that $\Gamma$ is a graph with the property that if $w_j$ is the asymptotic value in $S_k,$ 
then $w_{j+n}$ is the asymptotic value in $S_{k+\nu}.$ 
(For example, all standard graphs have this property, with $j=k.$)
It follows from the symmetric property of $\actE_j$ that $\actE_j$
preserves this property. To illustrate, 
we have that $\actE_j(\Psi_0)$ is given in Fig.~\ref{fig:curvessymmetric}.
\begin{figure}
\includegraphics[width=0.90\textwidth]{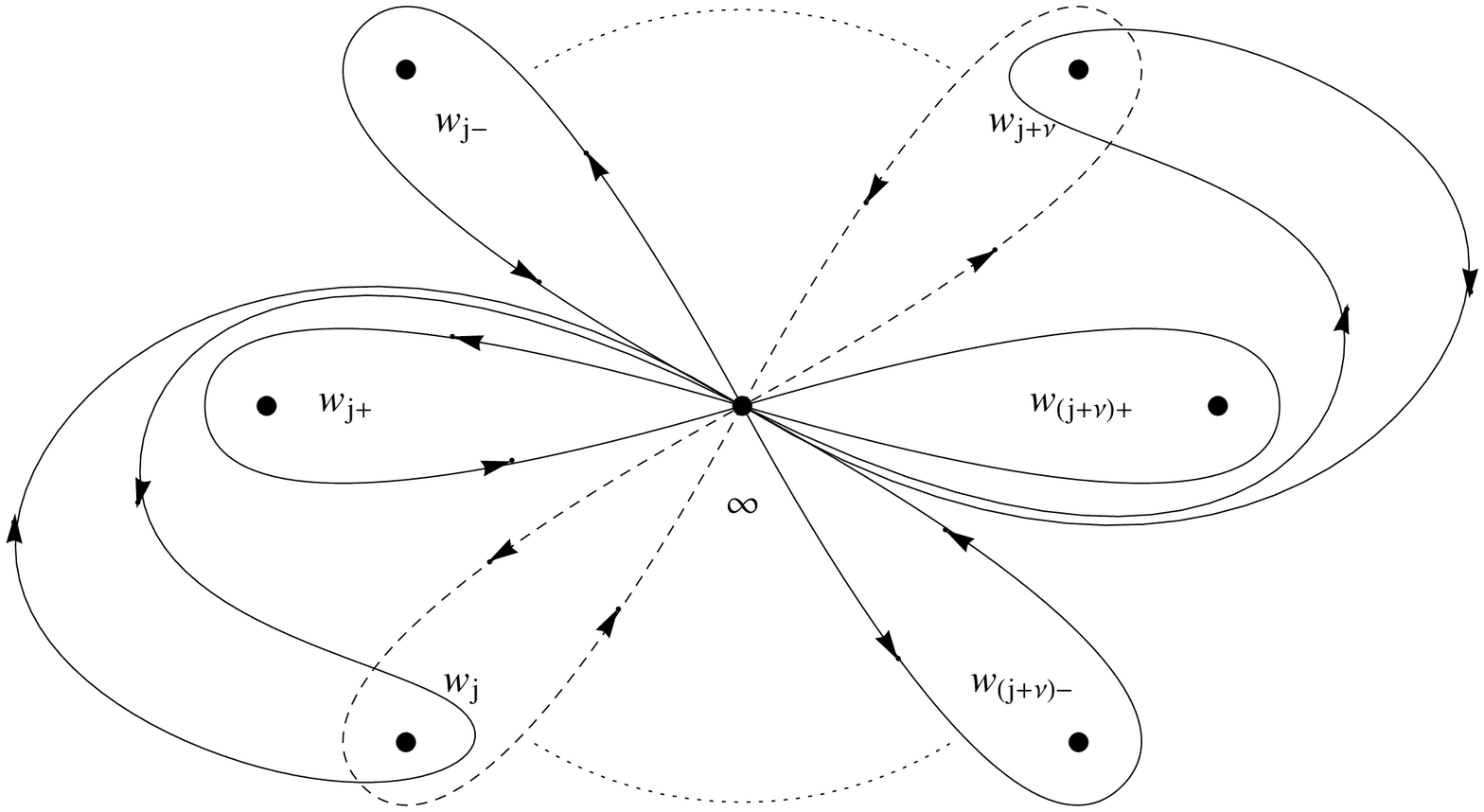}
\caption{$\actE_j(\Psi_0)$}\label{fig:curvessymmetric}
\end{figure}

\begin{lemma}
If $\Gamma$ is centrally symmetric, then $\actE_j(\Gamma)$ and $\actE_ j^{-1}(\Gamma)$ are centrally symmetric graphs.
\end{lemma}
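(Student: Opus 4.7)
The plan is to use that $f$ is odd, so central symmetry of $\Gamma = f^{-1}(\Psi_0)$ in $\C_z$ is equivalent to invariance of $\Psi_0$ under the involution $\sigma\colon w\mapsto -w$ on $\Cb_w$. By the choice made in Section~3.1, the loops are centrally symmetric in the sense that $\sigma(\gamma_k)=\gamma_{k+\nu}$ for all $k$ (indices mod $n=2\nu$), since $w_{k+\nu}=-w_k$; in particular $w_{k+\nu}=0$ iff $w_k=0$, so the ``skip zeros'' convention defining the subscript $+$ is symmetric, i.e.\ $(k_+)+\nu=(k+\nu)_+$.

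First I would record that the generators $\actA_j$ and $\actA_{j+\nu}$ commute: since the indices differ by $\nu\ge 2$, this is the standard commutation relation in the Hurwitz braid group already noted in Section~2.5. Hence $\actE_j=\actA_j\circ\actA_{j+\nu}=\actA_{j+\nu}\circ\actA_j$ is unambiguous.

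Next I would verify that $\sigma$ intertwines $\actA_j$ with $\actA_{j+\nu}$ by a direct calculation from the defining formula. The new loops produced by $\actA_j$, namely $\gamma_j\gamma_{j_+}\gamma_j^{-1}$ replacing $\gamma_j$ and $\gamma_j$ replacing $\gamma_{j_+}$, are mapped by $\sigma$ to $\gamma_{j+\nu}\gamma_{(j+\nu)_+}\gamma_{j+\nu}^{-1}$ and $\gamma_{j+\nu}$, using the identity from the first paragraph, and these are precisely the loops produced by $\actA_{j+\nu}$. Consequently, applying the commuting pair $\actA_j$ and $\actA_{j+\nu}$ to the $\sigma$-invariant $\Psi_0$ yields another $\sigma$-invariant cell decomposition. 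Pulling back through the odd map $f$ then gives central symmetry of $\actE_j(\Gamma)$ in $\C_z$; the inverse $\actE_j^{-1}$ is handled identically by replacing each $\actA_k$ with $\actA_k^{-1}$ in the calculation above, noting that $\actA_j^{-1}$ and $\actA_{j+\nu}^{-1}$ also commute and are also intertwined by $\sigma$.

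The main obstacle is essentially bookkeeping: one must confirm that the subscript $+$ convention is compatible with the shift by $\nu$ (handled in the first paragraph) and that $\sigma$, being an orientation-preserving homeomorphism of $\Cb_w$, really sends the directed loop $\gamma_k$ to the directed loop $\gamma_{k+\nu}$ and not its orientation-reversal. Once these compatibilities are in place, the conclusion is immediate from the formulas for $\actA_j$ and $\actA_j^{-1}$ and requires no further geometric input.
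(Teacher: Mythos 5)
Your proof is correct and follows essentially the same route as the paper: the paper simply notes that the deformations of $\gamma_j$ and $\gamma_{j+\nu}$ can be chosen centrally symmetric so that the composition $\actA_j\circ\actA_{j+\nu}$ preserves central symmetry, deferring details to a reference, while you spell out the same intertwining argument (with $\sigma\colon w\mapsto -w$ and the odd $f$) explicitly. No substantive difference in approach.
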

\begin{proof}
We may choose the deformations of the paths $\gamma_j$ and $\gamma_{j+\nu}$ being centrally symmetric,
which implies that the composition $\actA_j \circ \actA_{j+\nu}$ preserves 
the property of $\Gamma$ being centrally symmetric,
see details in \cite{gabrielov}.
\end{proof}

\begin{lemma}\label{lemma:noaction}
Let $\Gamma$ be a centrally symmetric standard graph with no $j\mhyp$junction. Then $\actE_j^2(\Gamma)=\Gamma.$
\end{lemma}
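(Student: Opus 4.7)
The plan is to reduce the statement to two ingredients already available: the commutation relations in the Hurwitz braid group, and the analogous single-action statement for general (not necessarily even) potentials from \cite{pergabrielov} that $\actA_j^2(\Gamma)=\Gamma$ whenever $\Gamma$ is a standard graph with no $j\mhyp$junction.

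First I would note that $\nu = (\deg P+2)/2 \geq 2$, hence $|j-(j+\nu)| = \nu \geq 2$, and so by the commutation relation recorded in Section~\ref{subsec:actiondef} the generators $\actA_j$ and $\actA_{j+\nu}$ commute. Therefore
\begin{equation*}
\actE_j^2 \;=\; (\actA_j \circ \actA_{j+\nu})^2 \;=\; \actA_j^2 \circ \actA_{j+\nu}^2,
\end{equation*}
so it suffices to verify that $\actA_j^2(\Gamma) = \Gamma$ and $\actA_{j+\nu}^2(\Gamma) = \Gamma$. Next, central symmetry of $\Gamma$ sends $S_k$ to $S_{k+\nu}$ and $w_k$ to $w_{k+\nu}=-w_k$, so it carries any $(j+\nu)\mhyp$junction to a $j\mhyp$junction. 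Consequently the hypothesis that $\Gamma$ has no $j\mhyp$junction implies that $\Gamma$ also has no $(j+\nu)\mhyp$junction.

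Given this, both factors $\actA_j^2$ and $\actA_{j+\nu}^2$ are being applied to a standard graph without the corresponding junction, and the general-potential lemma from \cite{pergabrielov} applies verbatim to each factor. Concatenating, $\actE_j^2(\Gamma) = \actA_j^2(\actA_{j+\nu}^2(\Gamma)) = \actA_j^2(\Gamma) = \Gamma$. The intuition behind the invoked lemma is that on the level of loops $\actA_j^2$ replaces $\gamma_j$ by $\gamma_j\gamma_{j_+}\gamma_j\gamma_{j_+}^{-1}\gamma_j^{-1}$ and $\gamma_{j_+}$ by $\gamma_j\gamma_{j_+}\gamma_j^{-1}$; these are homotopic to $\gamma_j$ and $\gamma_{j_+}$ respectively (in $\Cb_w$ punctured at the other asymptotic values) precisely when no $j\mhyp$junction is present to obstruct the homotopy, and lifting via $f^{-1}$ shows the cell decomposition $\Phi_0$, and hence $\Gamma$, is unchanged.

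The main obstacle in carrying this plan out is justifying the general-potential single-action claim rather than simply quoting it: one must track how the nontrivial loop conjugations defining $\actA_j^2$ lift through $f^{-1}$, and verify via a case analysis on the $I\mhyp$, $V\mhyp$ and $Y\mhyp$structures that could occur at the meeting and separation vertices of the $j\mhyp$ and $j_+\mhyp$edges that the absence of a $j\mhyp$junction is exactly the combinatorial condition ensuring triviality. Once that lemma is in hand, the present statement follows from the commutation identity and the symmetric-absence observation above with no additional work.
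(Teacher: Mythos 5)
Your proposal is correct and follows essentially the same route as the paper: factor $\actE_j^2 = \actA_j^2 \circ \actA_{j+\nu}^2$ using the commutativity of $\actA_j$ and $\actA_{j+\nu}$, then apply \cite[Lemma 12]{pergabrielov} to each factor. Your explicit observation that central symmetry rules out a $(j+\nu)\mhyp$junction is left implicit in the paper's one-line proof, but it is the right justification for applying the cited lemma to the second factor.
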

\begin{proof}
Since $\actA_j$ and $\actA_{j+n}$ commute, we have that $\actE_j^2 = \actA_j^2 \actA_{j+\nu}^2,$ 
and the statement then follows from \cite[Lemma 12]{pergabrielov}.
\end{proof}

\begin{theorem}\label{thm:action}
Let $\Gamma$ be a centrally symmetric standard graph with a $j\mhyp$junction $\junc.$
Then $\actE_j^2(\Gamma)\neq \Gamma,$ 
and the structure at the $j\mhyp$junction
is moved one step in the direction of the $j\mhyp$edges under $\actE_j^2.$
The inverse of $\actE^2_j$ moves the structure at the $j\mhyp$junction one step backwards along the $j_+\mhyp$edges.

Since $\Gamma$ is centrally symmetric, it also has a $j+\nu\mhyp$junction,
and the structure at the $j+\nu\mhyp$junction is
moved one step in the direction of the $j+n\mhyp$edges under $\actE_j^2.$
The inverse of $\actE^2_j$ moves the structure at the $j+\nu\mhyp$junction one step backwards along the $(j+\nu)_+\mhyp$edges.
\end{theorem}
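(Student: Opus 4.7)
The plan is to reduce the statement to the analogous non-symmetric result \cite[Lemma 12]{pergabrielov} by exploiting the commutativity of $\actA_j$ and $\actA_{j+\nu}$, and then to use central symmetry to handle the two junctions simultaneously.

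First, since $|j-(j+\nu)|=\nu\geq 2$, the generators $\actA_j$ and $\actA_{j+\nu}$ commute in the Hurwitz braid group, so
\[
\actE_j^2 \;=\; \actA_j^2\,\actA_{j+\nu}^2.
\]
Lemma 12 of \cite{pergabrielov} describes exactly how $\actA_j^2$ acts on a standard graph possessing a $j$-junction: the structure at that junction is translated one step along the $j$-edges (and the inverse squared action translates it one step backwards along the $j_+$-edges). I would apply this lemma directly to $\actA_j^2(\Gamma)$ to obtain the asserted motion of the structure at $\junc$.

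Next, I would use the central symmetry to transport the conclusion to the companion junction. Since $\Gamma$ is centrally symmetric and $w_j=-w_{j+\nu}$, the central involution sends the $j$-junction $\junc$ to a $(j+\nu)$-junction, together with a mirrored structure; in particular this companion junction is of the same type (I, V or Y) as $\junc$. Applying \cite[Lemma 12]{pergabrielov} a second time—now to the generator $\actA_{j+\nu}^2$ and the $(j+\nu)$-junction—yields the claimed translation of the structure at the $(j+\nu)$-junction along the $(j+\nu)$-edges.

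The main obstacle is to justify that the two squared actions do not interfere. I would argue this as follows: $\actA_j$ modifies only the loops $\gamma_j$ and $\gamma_{j_+}$, so $\actA_j^2$ alters only $j$-edges and $j_+$-edges of $\Gamma$; similarly $\actA_{j+\nu}^2$ alters only $(j+\nu)$-edges and $(j+\nu)_+$-edges. By central symmetry, $(j+\nu)_+ = j_+ +\nu$, and because $j_+\in\{j+1,j+2\}$ while $\nu\geq 2$, the four indices $j,\,j_+,\,j+\nu,\,j_++\nu$ are pairwise distinct modulo $n=2\nu$. Hence the set of edges touched by $\actA_j^2$ is disjoint from the set touched by $\actA_{j+\nu}^2$, and in particular each factor fixes the structure at the junction handled by the other factor. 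Composing, $\actE_j^2$ translates each of the two structures one step in the prescribed direction. Since the structure at $\junc$ is genuinely displaced, $\actE_j^2(\Gamma)\neq\Gamma$. The statement for $\actE_j^{-2}=\actA_j^{-2}\actA_{j+\nu}^{-2}$ is obtained by the identical argument applied to the inverse assertion of \cite[Lemma 12]{pergabrielov}.
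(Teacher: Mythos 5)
Your proposal is correct and follows essentially the same route as the paper, which simply writes $\actE_j^2=\actA_j^2\actA_{j+\nu}^2$ by commutativity and cites the corresponding single-junction result of \cite{pergabrielov} (there it is Theorem 13 rather than Lemma 12, the latter being the no-junction case). Your additional verification that the two squared factors act on disjoint sets of edges, and hence do not interfere, is a useful elaboration of the same argument rather than a different approach.
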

\begin{proof}
Since $\actE_j^2 = \actA_j^2 \actA_{j+\nu}^2,$ the result follows from \cite[Theorem 13]{pergabrielov}.
\end{proof}

\section{Proving Main Theorem \ref{thm:mainone}}

Notice that each centrally symmetric standard graph $\Gamma$ has either a vertex in its center, 
or a double edge, connecting two vertices.
This property follows from the fact that $\Gamma_T$ is a centrally symmetric tree.

\begin{lemma}
Let $\Gamma$ be a centrally symmetric graph. Then for every action $\actE_j,$
$\Gamma$ has a vertex at the center iff $\actE_j(\Gamma)$ has a vertex at the center.
\end{lemma}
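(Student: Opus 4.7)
The plan is to reformulate the lemma in terms of the odd meromorphic function $f=y/y_1$ associated with $\Gamma$ and then propagate the relevant information along the braid monodromy realizing $\actE_j$. Since vertices of $\Gamma$ are exactly the poles of $f$, the statement ``$\Gamma$ has a vertex at the center'' is equivalent to $f(0)=\infty$; because $f$ is odd, $f(-z)=-f(z)$ forces $f(0)\in\{0,\infty\}$ in $\Cb$, and these two options are precisely the ``vertex at center'' versus ``double edge at center'' dichotomy recalled immediately before the lemma.

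Next, I would realize $\actE_j=\actA_j\circ\actA_{j+\nu}$ as analytic continuation along a loop $(\alpha(t),\lambda(t))$ in the even-parameter space $\C^\nu$, and let $f_t$ denote the resulting continuous family of meromorphic functions. Because the loop stays inside the even slice, each $f_t$ is odd and therefore $f_t(0)\in\{0,\infty\}$ for every $t$. Viewing $t\mapsto f_t(0)$ as a continuous map from a connected interval into the two-point subset $\{0,\infty\}\subset\Cb$, the map is forced to be constant. Hence $f_0(0)=f_1(0)$, which is the ``only if'' direction of the lemma; the ``if'' direction follows by the same argument applied to $\actE_j^{-1}$, which is itself an even monodromy.

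The main obstacle is to justify that $t\mapsto f_t(0)$ is genuinely continuous into $\Cb$, since during the deformation poles of $f_t$ could in principle move into or out of the origin. Oddness provides the rigidity needed: if a simple pole of $f_t$ were to leave the origin, central symmetry would force a second simple pole to do so from the opposite side, producing at the collision instant a pole of order $2$ at $z=0$; but an odd meromorphic function can only have poles and zeros of odd order at the origin, a contradiction. The same symmetry argument rules out zeros of $f_t$ entering or leaving the origin. Hence no such collisions occur along the loop, $f_t(0)$ is continuous in $\Cb$, and the lemma follows. If this analytic argument proves delicate, a fallback is to argue combinatorially using the local description of $\actA_j$ on $\Phi_0=f^{-1}(\Psi_0)$ (and its analogue for $\actA_{j+\nu}$), which only modifies a neighborhood of the $j$- and $(j+\nu)$-junctions and therefore cannot disturb the central vertex or central double edge.
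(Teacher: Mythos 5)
Your proposal is correct in substance, but it takes a considerably heavier route than the paper, whose entire proof is the observation that an action is, by construction, a deformation of the loops $\gamma_j$ in the $w$-plane: the new graph is $f^{-1}(\actA_j(\Psi_0))$ for the \emph{same} function $f$, so its vertices are still exactly the poles of $f$ and only the edges are redrawn. Hence the presence or absence of a pole at $z=0$ — i.e.\ of a central vertex — cannot change. This is precisely your ``fallback'' argument, and it is the primary one; no analytic continuation is needed at the level at which the lemma is stated. Your main argument instead realizes $\actE_j$ as a loop $(\alpha(t),\lambda(t))$ in the even parameter slice and tracks $f_t(0)\in\{0,\infty\}$ by continuity and oddness. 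This is a valid picture (and it is essentially the point of view adopted later, in Proposition~\ref{thm:connected}, where actions are identified with loops in $Y_0$), but it imports machinery that the paper only justifies afterwards via Nevanlinna theory, and it leaves two small points glossed: that an odd normalization of $f_t=y_t/y_{1,t}$ can be chosen continuously along the whole loop, and that the identification of the action with such a parameter loop is legitimate at this stage. Your worry about pole collisions at the origin is actually moot: since $y_t$ and $y_{1,t}$ never vanish simultaneously (nonvanishing Wronskian), $(t,z)\mapsto f_t(z)$ is continuous into $\Cb$ with no indeterminacy, so $t\mapsto f_t(0)$ is automatically a continuous map into the discrete set $\{0,\infty\}$ and hence constant. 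In short: your argument works, but the combinatorial one-liner you relegate to a fallback is both sufficient and the intended proof; the analytic version buys a reinterpretation that becomes useful later but is overkill here.
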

\begin{proof}
This is evident from the definition of the actions, since the action only changes the edges,
and preserves the vertices.
\end{proof}
\begin{corollary}
The spectral determinant has at least two connected components.
\end{corollary}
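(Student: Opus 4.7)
The plan is to read the preceding lemma as supplying a locally constant $\mathbf{Z}/2$-valued invariant on $\Sigma$. Concretely, to every point $(\alpha,\lambda)\in\Sigma$ one associates its centrally symmetric standard graph $\Gamma_{\alpha,\lambda}$, and by the first remark after Theorem~\ref{thm:action} this graph has either a vertex at the center or a double edge at the center, but not both. The lemma says that each $\actE_j$ preserves this dichotomy, so the two subsets
\begin{equation*}
\Sigma_{\mathrm{v}}=\{(\alpha,\lambda)\in\Sigma:\Gamma_{\alpha,\lambda}\text{ has a vertex at the center}\},\quad \Sigma_{\mathrm{e}}=\Sigma\setminus\Sigma_{\mathrm{v}}
\end{equation*}
are each closed under the monodromy.

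First I would argue that the map $(\alpha,\lambda)\mapsto \Gamma_{\alpha,\lambda}$ is locally constant away from the discriminant, so that the invariant ``vertex vs.\ double edge at the center'' is constant on each connected component of $\Sigma$. Since the monodromy along any loop in the base lifts to a sequence of braid actions $\actE_j^{\pm1}$ acting on $\Gamma$, and each of these preserves the invariant by the lemma, no continuous path inside $\Sigma$ can exchange the two types. Consequently $\Sigma_{\mathrm{v}}$ and $\Sigma_{\mathrm{e}}$ are unions of connected components of $\Sigma$, and their disjointness is immediate.

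The remaining step is to verify that \emph{both} $\Sigma_{\mathrm{v}}$ and $\Sigma_{\mathrm{e}}$ are non\mhyp empty. The natural route is to specialize to $\alpha=0$, i.e.\ to the pure monomial potential $P_0(z)=z^d$, where the boundary value problem admits explicit eigenfunctions that are either even or odd in $z$, and each parity is known to be realized (this is the classical even/odd splitting alluded to in the introduction after Bender--Wu). Taking $y$ of one parity together with a linearly independent $y_1$ of the opposite parity produces an odd ratio $f=y/y_1$ whose pole/zero structure at the origin forces either a vertex of $\Gamma$ at the center (when $f$ has a pole at $0$) or a central edge of the associated tree (when $f$ is regular and non\mhyp zero there).

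I expect the main obstacle to be this last verification: matching the analytic parity of the eigenfunction to the combinatorial feature of $\Gamma_{\alpha,\lambda}$ at the origin. One has to be careful that what is ``at the center'' of $\Gamma$ really is determined by the local behaviour of $f$ at $z=0$, and that the two parities produce graphs of the two distinct types. Once this correspondence is in place, producing one eigenvalue of each parity at $\alpha=0$ finishes the proof and shows that $\Sigma$ has at least the two connected components $\Sigma_{\mathrm{v}}$ and $\Sigma_{\mathrm{e}}$.
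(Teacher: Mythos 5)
Your proposal is correct and follows essentially the route the paper intends: the dichotomy ``central vertex versus central double edge,'' established just before the lemma from the fact that $T_\Gamma$ is a centrally symmetric tree, is preserved by every $\actE_j$, hence is a monodromy invariant separating the fiber into at least two orbits and $\Sigma$ into at least two components. The paper leaves the non\mhyp emptiness of both classes implicit (it follows from the Nevanlinna\mhyp theoretic realization of every admissible centrally symmetric standard graph, and graphs of both types obviously exist); your parity argument at $\alpha=0$ is a legitimate way to supply that step, and the correspondence you worry about does hold --- vertices of $\Gamma$ are poles of the odd function $f=y/y_1$, so a central vertex occurs exactly when $f$ has a pole at $0$, i.e.\ when $y$ is even, while a central double edge bounds the bounded face containing the zero of $f$ at $0$, i.e.\ when $y$ is odd.
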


Each centrally symmetric standard graph $\Gamma$ is of one of two types:
\begin{enumerate}
\item $\Gamma$ has a central double edge. The vertices of the central double edge are called \defin{root junctions}.
\item $\Gamma$ has a junction at its center. This junction is called the \defin{root junction} $\junc_r.$
\end{enumerate}

\begin{definition}
A centrally symmetric standard graph $\Gamma$ is in \defin{ivy form} if $\Gamma$ consists of
structures connected to one or two root junctions.
\end{definition}

%Modified to measure metric on symmetric graphs, and works with even graphs.
\begin{definition}
Let $\Gamma$ be a centrally symmetric standard graph.

The \defin{root metric} of $\Gamma,$ denoted $|\Gamma|_{r}$ is defined as
$$|\Gamma|_{r} = \sum_{\vtex \in \Gamma} \left(\deg(\vtex)-2\right)|\vtex - \junc_r|$$
where the sum is taken over all vertices $\vtex$ of $\Gamma_1.$
Here $deg(\vtex)$ is the total degree of the vertex $\vtex$ in $T_\Gamma$ and 
$|\vtex - \junc_r|$ is the length of the shortest path from $\vtex$ 
to the closest root junction $\junc_r$ in $T_\Gamma.$
\end{definition}

\begin{lemma}
The graph $\Gamma$ is in ivy form if and only if all but its root junctions are 
$Y\mhyp$junctions.
\end{lemma}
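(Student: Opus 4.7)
The plan is to prove the two implications separately, with the forward direction essentially an unpacking of definitions and the reverse direction requiring a careful inductive peeling argument.

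For the forward direction, assume $\Gamma$ is in ivy form, so that $\Gamma$ decomposes as a union of structures each attached at a root junction. Among the three structure types introduced in Section~\ref{subsec:junctionoftype}, the $I$- and $V$-structures contain no additional junctions, while each $Y$-structure contributes exactly one extra junction which is, by definition, a $Y$-junction. Consequently every junction of $\Gamma$ is either the root-junction anchor of one of the structures in the decomposition, or else the extra $Y$-junction of a $Y$-structure attached at a root junction, and in the second case it is a $Y$-junction. Hence every non-root junction is a $Y$-junction.

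For the converse, assume every non-root junction of $\Gamma$ is a $Y$-junction and let $\Gamma_0 \subseteq \Gamma$ be the union of all structures whose $j$-junction is a root junction; the aim is to show $\Gamma_0 = \Gamma$. I would induct on the number of non-root junctions of $\Gamma$. The base case, with no non-root junctions at all, is immediate because then every $j$-junction is already a root junction and the corresponding structures tile $\Gamma$. For the inductive step, pick a non-root junction $u$ maximizing the tree distance in $T_\Gamma$ to the nearest root junction. By hypothesis $u$ is a $Y$-junction, so it arises as the extra junction of some $Y$-structure $\Xi$ at a $j$-junction $u'$. Using the topology of a $Y$-structure (Fig.~\ref{fig:junctiontype_case123}c) together with the central position of the root junction in $T_\Gamma$, $u'$ sits strictly closer to the center than $u$, and the maximality of $u$ forces the part of $\Xi$ beyond $u$ to consist only of a twig ending in poles. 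Removing $\Xi$ and its centrally symmetric partner produces a smaller centrally symmetric standard graph $\Gamma'$ with strictly fewer non-root junctions and still satisfying the hypothesis; by induction $\Gamma'$ is in ivy form. Reinstating $\Xi$ is then handled by noting that either $u'$ is already a root junction (in which case $\Xi$ is a structure at a root junction and belongs to $\Gamma_0$), or $u'$ is itself a $Y$-junction that the inductive description of $\Gamma'$ already locates inside some outer structure rooted at a root junction, and a short nested-structure check shows that $\Xi$ itself fits into the ivy decomposition of $\Gamma$.

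The main obstacle is the geometric claim that the $j$-junction $u'$ of a $Y$-structure containing a peripheral $Y$-junction $u$ is strictly closer to the center than $u$. Without this monotonicity the chain of nested $Y$-structures could in principle drift outward and the induction would fail. Establishing it should rely on a careful combination of central symmetry, the tree structure of $T_\Gamma$, and the orientation rules for $j$- and $j_+$-edges inside a $Y$-structure given by Lemma~\ref{lemma:labelingprop}. A minor secondary technicality is performing the ``reinstatement'' step uniformly in the two possible forms of the center (a single root junction versus a central double edge with two root junctions), but central symmetry makes this routine.
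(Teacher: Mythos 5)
Your forward direction is correct and is essentially the paper's intended argument: $I$- and $V$-structures contain no junction other than their defining $j$-junction, a $Y$-structure contains exactly one additional junction, and that additional junction is by definition a $Y$-junction; hence in an ivy-form graph every non-root junction is a $Y$-junction. The paper disposes of the whole lemma with the single sentence that it ``follows from the definitions of the structures,'' so the converse is also meant to be a direct definitional unwinding rather than an induction.

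Your converse, however, contains a genuine gap. The induction is set up over the class of centrally symmetric standard graphs, but the inductive step deletes a $Y$-structure $\Xi$ together with its centrally symmetric partner and asserts that the result is ``a smaller centrally symmetric standard graph $\Gamma'$.'' It is not: a standard graph is by construction $f^{-1}(\Psi_0)$ for an actual eigenfunction ratio $f$, with exactly $n=d+2$ unbounded faces labeled by the Stokes sectors and with each dominant face bounded by infinitely many $j$-edges (Lemma \ref{lemma:labelingprop}). A proper subgraph obtained by amputating a structure does not lie in this class, so the inductive hypothesis cannot be applied to $\Gamma'$, and the subsequent ``reinstatement'' step inherits the same problem. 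In addition, you explicitly leave unproved the monotonicity claim that the $j$-junction $u'$ of a $Y$-structure containing a peripheral $Y$-junction $u$ is strictly closer to the root, and you identify it yourself as load-bearing. A repair that avoids both issues is to argue directly on $T_\Gamma$, in the spirit of the reasoning inside the proof of Theorem \ref{thm:toIVY}: follow a branch of the tree away from a root junction; the first junction $u$ encountered is, by hypothesis, a $Y$-junction, i.e.\ the meet-point of the $j$- and $j_+$-edges of the $Y$-structure at some $j$-junction $u'$, and since the subdominant face adjacent to a $Y$-junction is adjacent to no other junction, $u'$ must be the root junction of that branch; beyond $u$ the branch is just the tail of that $Y$-structure. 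Hence each branch at a root junction is a single $I$-, $V$-, or $Y$-structure, which is the definition of ivy form --- no surgery on the graph and no induction are needed.
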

\begin{proof}
This follows from the definitions of the structures.
\end{proof}

\begin{theorem}\label{thm:toIVY}
Let $\Gamma$ be a centrally symmetric standard graph.
Then there is a sequence of even actions $\actE^* = \actE_{j_1}^{\pm2},\actE_{j_2}^{\pm2},\dots,$ 
such that $\actE^*(\Gamma)$ is in \defin{ivy form}.
\end{theorem}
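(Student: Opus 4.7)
The plan is to use the root metric $|\Gamma|_r$ as an integer-valued non-negative monovariant on centrally symmetric standard graphs and to show that whenever $\Gamma$ is not in ivy form, some even action $\actE_j^{\pm 2}$ strictly decreases it. Since the monovariant takes values in $\mathbf{Z}_{\geq 0}$, iteration terminates after finitely many steps at a graph in ivy form.

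First, observe that $\deg(\vtex)-2 = 0$ at every non-junction vertex, so the sum defining $|\Gamma|_r$ is effectively supported on the non-root junctions, each weighted by the positive integer $\deg(\vtex)-2$ times its $T_\Gamma\mhyp$distance to the nearest root junction $\junc_r$. In particular $|\Gamma|_r = 0$ forces every junction to coincide with a root junction, a state that is vacuously in ivy form.

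Now assume $|\Gamma|_r > 0$. By the preceding lemma there exists at least one non-root, non-$Y\mhyp$junction of $\Gamma$; among all such, choose $\junc$ at the maximum possible distance from $\junc_r$. Then $\junc$ is a $j\mhyp$junction for some $j$ and carries an $I\mhyp$, $V\mhyp$, or $Y\mhyp$structure. By the maximality of $|\junc-\junc_r|$, any junction lying strictly further from $\junc_r$ must be a $Y\mhyp$junction, hence properly contained inside a $Y\mhyp$structure. In particular, the first edge on the path in $T_\Gamma$ from $\junc$ to $\junc_r$ leaves the structure at $\junc$ on the inward side, while the structural $j$- and $j_+$-edges at $\junc$ extend outward. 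Applying Theorem~\ref{thm:action}, exactly one of $\actE_j^{2}$ or $\actE_j^{-2}$ slides the entire structure at $\junc$ one step inward along a $j$- or $j_+$-edge, replacing the $j\mhyp$junction by a new $j\mhyp$junction one step closer to $\junc_r$. The symmetry clause of Theorem~\ref{thm:action} simultaneously moves the paired $(j+\nu)\mhyp$structure inward by the same distance, and no other junction is affected. Hence $|\Gamma|_r$ drops strictly, and iteration produces the desired finite composition $\actE^{*}$.

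The main obstacle is the geometric claim that at a maximally distant non-$Y\mhyp$junction the structural edges genuinely point outward, so that an inward even action is available. This requires a local case analysis with the $I\mhyp$, $V\mhyp$, and $Y\mhyp$pictures of Fig.~\ref{fig:junctiontype_case123}, keeping in mind that a single junction may simultaneously carry structures for several labels: one may have to select the particular $j$ whose structure sits on the outward side rather than fix $j$ in advance. Once this local verification is in place, the monovariant argument together with Theorem~\ref{thm:action} finishes the proof.
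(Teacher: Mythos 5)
Your overall strategy coincides with the paper's: use $|\Gamma|_r$ as a non-negative integer monovariant, pick a non-root, non-$Y\mhyp$junction $\junc_1$ at maximal distance from the root, show it is a $j\mhyp$junction for a suitable $j$, and invoke Theorem~\ref{thm:action} to slide its structure one step toward the root (with the central symmetry handling the antipodal copy for free). However, the step you defer in your closing paragraph --- that the extremal junction really is a $j\mhyp$junction whose structure can be moved inward --- is not a routine local verification; it is the entire substance of the paper's proof, and your sketch gives no argument for it. Asserting ``$\junc$ is a $j\mhyp$junction for some $j$ and carries an $I\mhyp$, $V\mhyp$, or $Y\mhyp$structure'' and that ``the structural $j$- and $j_+$-edges at $\junc$ extend outward'' is assuming the conclusion: before one knows which dominant face labels are adjacent to $\junc_1$, there is no structure at $\junc_1$ to speak of, and a junction of $T_\Gamma$ need not a priori be a $j\mhyp$junction for any $j$ at all (the $j\mhyp$edges and $j_+\mhyp$edges might separate elsewhere, or never meet near $\junc_1$).

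The paper closes this gap as follows. Take the last edge of the path $p$ from the root $\junc_r$ to $\junc_1$, say from $\vtex$ to $\junc_1$; it borders a dominant face labeled $j$, so there is a $j\mhyp$edge between $\vtex$ and $\junc_1$. Suppose it is directed from $\junc_1$ to $\vtex$. If $\junc_1$ were \emph{not} a $j\mhyp$junction, then no $j_+\mhyp$edge is adjacent to $\junc_1$, and since $\junc_1$ is not a $Y\mhyp$junction it has a dominant face with some label $k\neq j, j_+$, which forces every vertex adjacent to a $j_+\mhyp$edge to lie strictly farther from $\junc_r$ than $\junc_1$. The nearest such vertex $\junc_2$ is then a junction beyond $\junc_1$, hence by maximality a $Y\mhyp$junction; one then derives a contradiction either way (if the $j$- and $j_+$-edges meet at $\junc_2$ then $\junc_1$ was a $j\mhyp$junction after all; otherwise the subdominant face labeled $j+1$ would be adjacent to both $\junc_1$ and $\junc_2$, impossible for a $Y\mhyp$junction). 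The opposite orientation of the $j\mhyp$edge is handled symmetrically, yielding a $j_-\mhyp$junction and the inverse action $\actE_{j_-}^{-2}$. Without some version of this argument --- which uses the face-labeling properties of Lemma~\ref{lemma:labelingprop} and the maximality of the path in an essential way --- your monovariant has not been shown to decrease, so the proposal as written is incomplete. (A minor additional point: termination is not at $|\Gamma|_r=0$, since $Y\mhyp$junctions of the surviving $Y\mhyp$structures still contribute to $|\Gamma|_r$; the process stops as soon as every non-root junction is a $Y\mhyp$junction, which your well-ordering argument does still deliver.)
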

\begin{proof}
Assume that $\Gamma$ is not in ivy form.

Let $U$ be the set of junctions in $\Gamma$ that are not $Y\mhyp$junctions. 
Since $\Gamma$ is not in ivy form we have that $|U|\geq 3.$
Let $\junc_r \neq \junc_1$ be two junctions in $U$ such that $|\junc_r-\junc_1|$ is maximal,
and $\junc_r$ is the central junction closest to $\junc_1.$
Let $p$ be the path from $\junc_r$ to $\junc_1$ in $T_\Gamma.$
It is unique since $T_\Gamma$ is a tree.
Let $\vtex$ be the vertex preceding $\junc_1$ on the path $p.$ 
The edge from $\vtex$ to $\junc_1$ in $T_\Gamma$ is adjacent to 
at least one dominant face with label $j$ such that $w_j \neq 0.$ 
Therefore, there exists a $j\mhyp$edge between $\vtex$ and 
$\junc_1$ in $\Gamma.$
Suppose first that
this $j\mhyp$edge is directed from $\junc_1$ to $\vtex.$
Let us show that in this case $\junc_1$ must be a $j\mhyp$junction, 
i.e., the dominant face labeled $j_+$ is adjacent to $\junc_1$.

Since $\junc_1$ is not a $Y\mhyp$junction, there is a dominant face
adjacent to $\junc_1$ with a label $k\ne j,j_+$.
Hence no vertices of $p$, except possibly $u_1$ can be
adjacent to  $j_+\mhyp$edges.
If $\junc_1$ is not a $j\mhyp$junction, there are no 
$j_+\mhyp$edges adjacent to $\junc_1$.
This implies that any vertex of $\Gamma$ adjacent to a $j_+\mhyp$edge
is further away from $\junc_r$ than $\junc_1$.

Let $\junc_2$ be the closest to $\junc_1$ vertex of $\Gamma$
adjacent to a $j_+\mhyp$edge.
Then $\junc_2$ should be a junction of $T_\Gamma$, 
since there are two $j_+\mhyp$edges adjacent to $\junc_2$ in $\Gamma$
and at least one more vertex (on the path from $\junc_1$ to $\junc_2$) 
which is connected to $\junc_2$ by edges with labels other than $j_+$.
Since $\junc_2$ is further away from $\junc_r$ that $\junc_1$
and the path $p$ is maximal, $\junc_2$ must be a $Y\mhyp$junction.
If the $j\mhyp$edges and $j_+\mhyp$edges would meet at $\junc_2$,
$\junc_1$ would be a $j\mhyp$junction.
Otherwise, a subdominant face labeled $j+1$
would be adjacent to both $\junc_1$ and $\junc_2$,
while a subdominant face adjacent to a $Y\mhyp$junction
cannot be adjacent to any other junctions.

Hence $\junc_1$ must be a $j\mhyp$junction.
By Theorem \ref{thm:action}, the action $\actE_{j}^{2}$ 
moves the structure at the $j\mhyp$junction $\junc_1$
one step closer to $\junc_r$ along the path $p,$
and similarly happens on the opposite side of $\Gamma,$
decreasing $|\Gamma|_{c}$ by at least 2.

The case when the $j\mhyp$edge is directed from $\vtex$ to $\junc_1$
is treated similarly. In that case,
$\junc_1$ must be a $j_-\mhyp$junction, and the action $\actA_{j_-}^{-2}$ moves 
the structure at the $j_-\mhyp$junction $\junc_1$ 
one step closer to $\junc_r$ along the path $p.$

We have proved that if $|U|>1$ then $|\Gamma|_{r}$ can be reduced. 
Since it is a non-negative integer,
after finitely many steps we must reach a stage where $U$ consists only of the root junctions.
Hence $\actE^*(\Gamma)$ is in ivy form.
\end{proof}

The above Theorem shows that for every centrally symmetric standard graph $\Gamma,$
there is a sequence of actions that turns $\Gamma$ into ivy form. 
A graph in ivy form consists of one or two root junctions, with attached structures.
These structures can be ordered counterclockwise around each root junction.
These observations motivates the following lemmas:

\begin{lemma}\label{lemma:contracttwo}
Let $\Gamma$ be a centrally symmetric standard graph, 
and let $\junc_r \in \Gamma$ be a root junction of type $j_-$ and of type $j.$
Let $S_1$ and $S_2$ be the corresponding structures attached to $\junc_r.$
\begin{enumerate}
\item If $S_1$ and $S_2$ are of type $Y$ resp. $V,$ 
then there is a sequence of even actions that interchange these structures.
\item If $S_1$ and $S_2$ are of type $I$ resp. $Y,$ 
then there is a sequence of even actions that converts the type $Y$ structure to a type $V$ structure.
\item If $S_1$ and $S_2$ are both of type $Y,$
then there is a sequence of even actions that converts one of the $Y\mhyp$structures to a $V\mhyp$structure.
\end{enumerate}
\end{lemma}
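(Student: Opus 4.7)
The plan for all three cases is to repeatedly apply the even actions $\actE_j^{\pm 2}$ provided by Theorem~\ref{thm:action}, exploiting the fact that the root junction $\junc_r$ serves simultaneously as a $j_-\mhyp$junction and as a $j\mhyp$junction, so both pieces of local data (for moving along $j_-\mhyp$edges and along $j\mhyp$edges) are available. Each application of $\actE_j^{\pm 2}$ shifts the structure at the relevant junction one step along the corresponding edges, and central symmetry guarantees that the analogous shift occurs on the mirror half of $\Gamma$. In effect this gives us a local mechanism near $\junc_r$ for sliding or reshaping structures about the central junction.

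For case~(1), where $S_1$ is a $Y\mhyp$structure and $S_2$ is a $V\mhyp$structure, the proposal is to apply $\actE_{j_-}^{\mp 2}$ iteratively (the sign chosen according to the orientations of the relevant edges) in order to drag the internal $Y\mhyp$junction of $S_1$ toward $\junc_r$. After finitely many steps the $Y\mhyp$junction arrives adjacent to $\junc_r$, and a further action pushes the structure past the central junction so that it emerges on the side previously occupied by $S_2$; meanwhile the $V\mhyp$structure at the $j\mhyp$junction is pushed in the opposite direction, acquires an extra internal junction, and becomes a $Y\mhyp$structure. The net effect is that the two structures have been interchanged.

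For case~(2), with $S_1$ an $I\mhyp$structure and $S_2$ a $Y\mhyp$structure, the label hypotheses obstruct the swap used in case~(1): an $I\mhyp$structure at a $j_-\mhyp$junction requires $w_{j} \neq 0$, whereas $Y\mhyp$ and $V\mhyp$structures require the corresponding subdominant sector to be empty. The only available motion is therefore to iterate $\actE_j^{\pm 2}$ and draw the internal $Y\mhyp$junction of $S_2$ back toward $\junc_r$ until the ``meet'' and ``separate'' vertices of $S_2$ coincide at a single vertex; at this stage $S_2$ has collapsed to a $V\mhyp$structure by definition. Case~(3) is handled by the same argument applied to either one of the two $Y\mhyp$structures at $\junc_r$: pulling its internal $Y\mhyp$junction into the root-adjacent position converts that $Y\mhyp$structure to a $V\mhyp$structure, independently of what is on the other side of $\junc_r$.

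The main obstacle is to check that at each intermediate step the resulting graph remains a centrally symmetric standard graph, and that the junction whose structure we wish to move still has the required $j\mhyp$ or $j_-\mhyp$type, so that Theorem~\ref{thm:action} continues to apply. This is essentially a careful local analysis of the edge orientations and face labels in a small neighborhood of $\junc_r$ and of the migrating $Y\mhyp$junction; once that bookkeeping is in place, each of the three cases reduces to a finite induction on the graph-distance between the internal $Y\mhyp$junction of the structure being modified and $\junc_r$.
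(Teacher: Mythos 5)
Your overall strategy---use the even actions of Theorem~\ref{thm:action} to slide and reshape the structures attached to $\junc_r$---is the right general idea, but as written the proposal is a narrative of what you hope the actions do, not a proof. The entire combinatorial content of the lemma lives precisely in the step you defer: verifying, case by case, that specific sequences of $\actE_{j}^{\pm2}$ and $\actE_{j_-}^{\pm2}$ actually effect the interchange in case~(1) and the $Y\to V$ conversion in cases~(2) and~(3), and that at each intermediate stage the relevant vertex is still a $j\mhyp$ (or $j_-\mhyp$) junction so that Theorem~\ref{thm:action} applies. You name this as ``the main obstacle'' and then assert it is ``essentially a careful local analysis'' without performing it. The paper closes exactly this gap by citing Lemmas~19, 20 and~22 of \cite{pergabrielov}, which establish the three conversions for the non-symmetric squared actions $\actA_{k}^{\pm2}$ with $k\in\{j,j_-\}$; the only new content in the even setting is the observation that the mirrored sequence (indices shifted by $\nu$) acts on the antipodal copies $S_1'$, $S_2'$, and that since $n\ge 6$ the two sequences commute and can be regrouped into even actions $\actE_{k}^{\pm2}$. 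Without either importing those lemmas or reproving their content, your argument does not establish the statement.

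There are also concrete inaccuracies in your description of the mechanism. In case~(1) you claim that while $\actE_{j_-}^{\mp2}$ drags the $Y\mhyp$junction of $S_1$ toward $\junc_r$, the $V\mhyp$structure $S_2$ ``is pushed in the opposite direction'' by the same action and turns into a $Y\mhyp$structure; but Theorem~\ref{thm:action} only moves the structures at the $j_-\mhyp$ and $(j_-+\nu)\mhyp$junctions, so any effect on $S_2$ is an indirect consequence that must be checked, not read off from the theorem. Likewise, ``pushing a structure past the central junction'' is not an operation the theorem provides. In case~(2) your remark that the label hypotheses ``obstruct the swap'' is a reasonable heuristic for why the conclusion is weaker there, but it is not a proof that the conversion succeeds. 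I would recommend either citing the companion paper's lemmas explicitly and then performing the symmetrization (the paper's route, which also requires noting the commutativity of $\actA_k$ and $\actA_{k+\nu}$), or else writing out the local edge-orientation and face-label analysis in full for at least one case and explaining why the others are analogous.
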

\begin{proof}
By symmetry, there are identical structures in $\Gamma$ attached to a root junction of type $n+j_-$ and $n+j,$
with attached structures $S'_1$ and $S'_2$ of the same type as $S_1$ resp. $S_2.$

Lemma 19, 20 and 22 in \cite{pergabrielov}, 
gives the existence of a non-even sequence of actions, 
that only acts on $S_1$ and $S_2$ in the desired way.

In all these cases, the sequence is of the form
$$\actA^* = \actA_{k_1}^{\pm 2} \actA_{k_2}^{\pm 2}\dots \actA_{k_m}^{\pm 2} $$
where $k_1,k_2,\dots k_m \in \{ j,j_- \}.$
It follows that the action 
$$\actB^* = \actA_{k_1+\nu}^{\pm 2} \actA_{k_2+\nu}^{\pm 2}\dots \actA_{k_m+\nu}^{\pm 2} $$
do the same as $\actA^*$ but on $S'_1$ and $S'_2.$

Now, $\actE^* = \actA^* \circ \actB^*$ is even, since by commutativity\footnote{We have at least 4 structures, 2 of them are Y or V structures. Hence $n\geq 6$ and we have commutativity.}, it is equal to
$$(\actA_{k_1}^{\pm 2}\actA_{k_1+\nu}^{\pm 2}) (\actA_{k_2}^{\pm 2}\actA_{k_2+\nu}^{\pm 2}) \dots (\actA_{k_m}^{\pm 2}\actA_{k_m+\nu}^{\pm 2})$$
which easily may be written in terms of our even actions as
$$\actE_{k_1}^{\pm 2} \actE_{k_2}^{\pm 2}\dots \actE_{k_m}^{\pm 2}.$$
This sequence of actions has the desired property.
\end{proof}

\begin{corollary}
Let $\Gamma$ be a centrally symmetric graph, with two adjacent dominant faces.
Then there is a sequence of even actions $\actE^*$ such that $\actE^*(\Gamma)$ has either one or two junctions.
\end{corollary}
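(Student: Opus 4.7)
The plan is to combine Theorem~\ref{thm:toIVY} with Lemma~\ref{lemma:contracttwo}. First I would apply Theorem~\ref{thm:toIVY} to produce a sequence of even actions that brings $\Gamma$ into ivy form, so that we may assume $\Gamma$ itself is in ivy form. In that form, every junction besides the root junction(s) is the inner $Y$-junction of some attached $Y$-structure, since $I$- and $V$-structures contain no inner junction. Hence to finish it suffices to find a further sequence of even actions that eliminates every $Y$-structure; once that is done, only the one or two root junctions remain as junctions of $\Gamma$.

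Next I would extract an $I$-structure from the hypothesis. Two adjacent dominant faces give an index $j$ with $w_j\ne 0$ and $w_{j+1}\ne 0$, i.e.\ $j_+=j+1$, so at the root junction the $j$-edges and $j_+$-edges separate in configuration~(a) of Figure~\ref{fig:junctiontype_case123}. This yields an $I$-structure attached to the root junction, and by central symmetry one on each root-junction side. With this in hand, I would iterate the three moves of Lemma~\ref{lemma:contracttwo} around the cyclic order of structures at the root junction: item~(1) swaps an adjacent $(Y,V)$ pair, item~(3) converts one of two adjacent $Y$-structures into a $V$-structure, and item~(2) converts a $Y$-structure adjacent to an $I$-structure into a $V$-structure. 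So long as any $Y$-structure remains, I would use item~(1) to slide a chosen $Y$ through consecutive $V$-structures until it is adjacent to another $Y$ or to an $I$, and then apply item~(3) or item~(2) respectively. Each move strictly decreases the number of $Y$-structures, so the process terminates in an ivy form with only $I$- and $V$-structures, which has only root junctions.

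The main obstacle will be making the migration step precise, because item~(1) cannot move a $Y$-structure past an $I$- or another $Y$-structure. This amounts to a finite combinatorial argument on the cyclic word of structure types around the root junction. If only one $Y$ remains, it can be slid through the flanking block of $V$-structures until it reaches one of the $I$-structures produced above, and then killed by item~(2). If several $Y$-structures remain, then in the cyclic order pick two that are consecutive among the $Y$-structures; between them lies a block consisting only of $V$-structures (and possibly $I$-structures), and in the first sub-case item~(1) brings them adjacent for item~(3), while in the second sub-case item~(1) brings one of them adjacent to the intervening $I$ for item~(2). One must also verify that each application of Lemma~\ref{lemma:contracttwo} preserves the ivy form and leaves every other attached structure untouched, but this is immediate from the explicit form $\actE^{\ast}=\actA^{\ast}\circ\actB^{\ast}$ constructed in the proof of that lemma.
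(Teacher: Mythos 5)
Your proposal is correct and follows essentially the same route as the paper: reduce to ivy form via Theorem~\ref{thm:toIVY}, observe that the two adjacent dominant faces guarantee an $I$\mhyp structure at the root junction, and then use the moves of Lemma~\ref{lemma:contracttwo} to migrate and eliminate all $Y$\mhyp structures, leaving only the root junction(s). Your treatment of the migration step is in fact slightly more careful than the paper's (which silently assumes a $Y$ can be brought adjacent to the $I$\mhyp structure), so no changes are needed.
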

\begin{proof}
We may apply even actions to make $\Gamma$ into a standard graph, and then convert it to ivy form.
The condition that we have two dominant faces, is equivalent to existence of $I\mhyp$structures.
If there are no $Y\mhyp$structures, then the only junctions of $\Gamma$ are the root junctions, and we are done.
Otherwise, we may the $Y\mhyp$ and $V\mhyp$structures, so that a $Y\mhyp$structure appears next to the $I\mhyp$structure.
By using the second part of the above lemma, we decrease the number of $Y\mhyp$structures of $\Gamma$ by two.
After a finite number of actions, we arrive at a graph in ivy form without $Y\mhyp$structures.
\end{proof}

\begin{lemma}\label{lemma:toTwoY}
Let $\Gamma$ be a centrally symmetric graph, with no adjacent dominant faces.
Then there is a sequence of even actions $\actE^*$ such that $\actE^*(\Gamma)$ is in ivy form, 
with at most two $Y\mhyp$structures.
\end{lemma}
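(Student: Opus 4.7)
The plan is to first apply even actions to bring the (a priori merely centrally symmetric) graph $\Gamma$ into standard form and then invoke Theorem \ref{thm:toIVY} to reach ivy form, so that $\Gamma$ becomes a collection of structures attached to one or two root junctions. The hypothesis that no two dominant faces are adjacent forbids the presence of any $I$-structure: by definition an $I$-structure at a $j$-junction requires $w_{j+1}\neq 0$, which would place the dominant sectors $S_j$ and $S_{j+1}$ next to each other. Consequently, after this first reduction every attached structure is either a $V$-structure or a $Y$-structure.

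Central symmetry forces the $Y$-structures to come in pairs. If $\Gamma$ has a single central root junction $\junc_r$, the cyclic sequence of structures around $\junc_r$ is invariant under the antipodal map, so each $Y$-structure at some position is matched with a $Y$-structure at the antipodal position. If $\Gamma$ has two root junctions, the central involution swaps them and identifies their attached structures in a type-preserving way. In either case the total number of $Y$-structures is even.

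Suppose this number is at least four. I would use Lemma \ref{lemma:contracttwo}(1) repeatedly to interchange adjacent $V$- and $Y$-structures around a chosen root junction until two $Y$-structures occupy adjacent positions; because each such swap is built from even actions and because of central symmetry, the corresponding swap happens simultaneously at the mirror site, so $\Gamma$ remains centrally symmetric and in ivy form throughout. Once two $Y$-structures sit adjacent at a root junction, Lemma \ref{lemma:contracttwo}(3) supplies a sequence of even actions converting one of them into a $V$-structure, and by central symmetry the mirror $Y$-structure is converted at the same time, so the total $Y$-count drops by exactly two. Since this count is a non-negative even integer which strictly decreases at each round, after finitely many iterations we obtain a graph in ivy form with at most two $Y$-structures, which is the desired $\actE^*(\Gamma)$.

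The main obstacle I anticipate is the adjacency step: with only the swap move of Lemma \ref{lemma:contracttwo}(1) at one's disposal, one must verify that whenever the total $Y$-count exceeds two, two $Y$-structures can always be brought into adjacent positions at some root junction under the constraint that central symmetry forces a simultaneous permutation on the opposite side. For a single root junction this reduces to a short combinatorial observation about cyclic sequences invariant under the antipodal map, and for two root junctions it is immediate since in that case each root junction carries at least two $Y$-structures. A secondary point, which follows from the statement of Lemma \ref{lemma:contracttwo}, is that none of the intermediate actions creates new junctions, introduces an $I$-structure, or otherwise destroys the ivy form, so that the hypothesis of no adjacent dominant faces is preserved throughout the procedure.
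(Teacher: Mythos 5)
Your proposal is correct and follows essentially the same route as the paper: reduce to ivy form via Theorem \ref{thm:toIVY}, note that the no-adjacent-dominant-faces hypothesis excludes $I$-structures, then repeatedly use Lemma \ref{lemma:contracttwo}(1) to make two $Y$-structures adjacent at a common root junction and Lemma \ref{lemma:contracttwo}(3) to convert one to a $V$-structure, with central symmetry handling the mirror side so the count drops by two each round. The extra care you take (making $\Gamma$ standard first, the parity observation, and the pigeonhole placing two $Y$-structures at one root junction) only makes explicit what the paper leaves implicit.
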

\begin{proof}
By Teorem \ref{thm:toIVY}, we may assume that $\Gamma$ is in ivy form.
Since there are no adjacent dominant sectors, the only structures of $\Gamma$ are of $Y$ and $V$ type.
These are attached to the one or two root junctions.

Assume that there are more than two $Y\mhyp$structures present.
Two of these must be attached to the same root junction, $\junc_r.$
By repeatedly applying part one of Lemma \ref{lemma:contracttwo}, 
we may interchange the $Y\mhyp$ and $V\mhyp$structures attached to $\junc_r$
such that the two $Y\mhyp$structures are adjacent.
Applying part three of Lemma \ref{lemma:contracttwo}, 
we may then convert one of the two $Y\mhyp$structures to a $V\mhyp$structure.

By symmetry, the same change is done on the opposite side of $\Gamma$
and total number of $Y\mhyp$structures of $\Gamma$ have therefore been reduced by two.
We may repeat this procedure a finite number of times,
until the number of $Y\mhyp$structures is less than three.
This implies the lemma.
\end{proof}

\begin{lemma}[See \cite{pergabrielov}]\label{lemma:boundedfaces}
Let $\Gamma$ be a standard graph such that no two dominant faces are adjacent.
Then the number of bounded faces of $\Gamma$ is finite
and does not change after any action $\actA_j^2$.
\end{lemma}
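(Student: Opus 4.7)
The plan is to follow the strategy used for the analogous statement in \cite{pergabrielov}, adapted to the present setup. First I would observe that since $\Gamma$ is a standard graph, $T_\Gamma$ is a tree, so every bounded face of $\Gamma$ must in fact be a \emph{digon}: a pair of parallel oriented edges of $\Gamma$ that collapses to a single edge of $T_\Gamma$. By property (\ref{list:numbering}) of Lemma \ref{lemma:labelingprop}, such a digon carries two consecutive labels $j$ and $j_+$ with $w_j,w_{j_+}\neq 0$, oriented oppositely along its boundary. Thus the problem reduces to counting digons of $\Gamma$.

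For finiteness, my argument would be that the hypothesis ``no two dominant faces are adjacent'' prevents digons from accumulating toward infinity. By property (\ref{list:labelexistone}), the boundary of a dominant face labeled $j$ consists of infinitely many $j$-edges oriented counterclockwise around the face. Two such infinite chains of $j$-edges and $j_+$-edges can share infinitely many pairs of common endpoints — thereby producing infinitely many digons — only if the dominant faces labeled $j$ and $j_+$ share infinitely many common boundary segments, which forces them to be adjacent. Since a standard graph has only finitely many junctions ($T_\Gamma$ is a tree with exactly $n$ ends), all digons must then be confined to a compact region around the junctions, giving a finite total count.

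For the invariance under $\actA_j^2$, I would use the same dichotomy as in Theorem \ref{thm:action}. If $\Gamma$ has no $j$-junction, then $\actA_j^2(\Gamma)=\Gamma$ by the single-action analogue of Lemma \ref{lemma:noaction} (this is \cite[Lemma 12]{pergabrielov}), and the count of bounded faces is trivially preserved. If $\Gamma$ does have a $j$-junction, then $\actA_j^2$ shifts the structure at that junction one step along the $j$-edges; this is a purely local surgery that removes one digon at one end of the structure and creates an isomorphic one at the new end, leaving the number of bounded faces unchanged.

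The main obstacle I anticipate is making the confinement-of-digons step fully rigorous: one needs to verify carefully that outside a compact neighbourhood containing all junctions, the graph $\Gamma$ really consists of the expected infinite $j$-chains along the boundaries of the dominant faces, with no further digons hidden far out. This amounts to an induction along the infinite branches of $T_\Gamma$, using property (\ref{list:uniquenr}) to rule out bounded faces with more than two edges and the non-adjacency hypothesis to rule out pairs of parallel edges of consecutive non-zero labels between vertices lying arbitrarily far from the junctions.
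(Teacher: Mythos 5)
The paper itself gives no proof of this lemma --- it is imported verbatim from \cite{pergabrielov} --- so there is no internal argument to compare against; I am judging your proposal on its merits. Your reduction of bounded faces to digons (via $T_\Gamma$ being a tree) is correct and is the right starting point, and your finiteness idea is essentially sound: an infinite family of digons with labels $j,j_+$ would have to run off to infinity along an infinite branch of $T_\Gamma$, and the two unbounded faces flanking that branch would then be the dominant faces $j$ and $j_+$ sitting in adjacent Stokes sectors, contradicting the hypothesis. (Your phrasing ``share infinitely many common boundary segments, which forces them to be adjacent'' is backwards --- the digons \emph{separate} the two dominant faces rather than making them share edges --- but the conclusion that both neighbouring sectors would have to be dominant is the correct use of the hypothesis, and the finitely-many-junctions argument handles the bounded part.)

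The genuine gap is in the invariance step, not where you flagged it. Your description of $\actA_j^2$ as ``a purely local surgery that removes one digon at one end of the structure and creates an isomorphic one at the new end'' never invokes the hypothesis that no two dominant faces are adjacent --- yet the conclusion is false without it. When adjacent dominant faces exist there are $I\mhyp$structures, and the paper's own Lemma \ref{lemma:contracttwo}(2), its corollary, and the worked example in the last section show that squared actions then convert $Y\mhyp$structures (whose stems are chains of digons) into $V\mhyp$structures (which contain none), strictly decreasing the number of bounded faces; since a finite composition of the $\actA_k^{\pm 2}$ achieves this, some single $\actA_k^{\pm 2}$ must already change the count. Indeed, if your surgery argument were valid as stated, the number of bounded faces would be a monodromy invariant also when $m<\nu/2$, contradicting Proposition \ref{thm:connected} and Theorem \ref{thm:mainone}. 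The essential content of the lemma is therefore exactly the point you pass over: under the hypothesis every structure is of type $V$ or $Y$ (an $I\mhyp$structure requires $w_{j+1}\neq 0$, i.e.\ two adjacent dominant sectors), and one must verify that the translation of such a structure under $\actA_j^2$ preserves its type and its stem length, and does not disturb digons carrying other label pairs.
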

\begin{corollary}
The number of bounded faces of $\Gamma$ does not change under any \emph{even} action $\actE_j^2 = \actA_j^2 \actA_{j+\nu}^2.$
\end{corollary}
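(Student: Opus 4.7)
The plan is almost immediate from the preceding lemma together with the definition of the even action. Since $\actE_j^2 = \actA_j^2 \,\actA_{j+\nu}^2$ by definition, it suffices to verify that each of the two factors on the right preserves the number of bounded faces, and then chain the two equalities.

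First, I would invoke Lemma \ref{lemma:boundedfaces} directly on $\Gamma$ with the action $\actA_{j+\nu}^2$: by hypothesis $\Gamma$ is a standard graph with no two adjacent dominant faces, so the number of bounded faces of $\actA_{j+\nu}^2(\Gamma)$ equals that of $\Gamma$. Next I would apply the same lemma with the action $\actA_j^2$ to $\actA_{j+\nu}^2(\Gamma)$, so that altogether the number of bounded faces of $\actA_j^2\actA_{j+\nu}^2(\Gamma) = \actE_j^2(\Gamma)$ equals that of $\Gamma$.

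The one point that needs a remark, and which I expect to be the only real obstacle, is justifying that the intermediate graph $\actA_{j+\nu}^2(\Gamma)$ still satisfies the hypothesis of Lemma \ref{lemma:boundedfaces} so that the second application is legitimate. Here I would use two facts already established in the paper: the asymptotic values in Stokes sectors are unchanged by $\actA_k^2$ (since squaring the half-twist returns each $w_k$ to its position), so the set of dominant versus subdominant sectors, and their cyclic adjacency pattern, is invariant. Thus ``no two dominant faces are adjacent'' is preserved. The standard order is likewise preserved by $\actA_k^2$, since a single $\actA_k$ swaps $w_k$ with $w_{k+}$ and $\actA_k^2$ returns them, so $\actA_{j+\nu}^2(\Gamma)$ is again a standard graph.

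Having cleared that technicality, the conclusion is the chain of equalities for the number of bounded faces
\begin{equation*}
\#\mathrm{bdd}(\actE_j^2(\Gamma)) = \#\mathrm{bdd}(\actA_j^2\actA_{j+\nu}^2(\Gamma)) = \#\mathrm{bdd}(\actA_{j+\nu}^2(\Gamma)) = \#\mathrm{bdd}(\Gamma),
\end{equation*}
which proves the corollary.
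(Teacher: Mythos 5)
Your proposal is correct and follows exactly the route the paper intends (the paper leaves the corollary without proof precisely because it is the immediate composition argument you give): factor $\actE_j^2=\actA_j^2\actA_{j+\nu}^2$ and apply Lemma \ref{lemma:boundedfaces} to each factor. Your extra check that the intermediate graph $\actA_{j+\nu}^2(\Gamma)$ still satisfies the lemma's hypotheses is a sensible piece of diligence the paper omits, but it does not change the approach.
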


\begin{lemma}\label{lemma:smooth}
Let $\nu = n/2=d/2+1$ and let $\Sigma$ be the space of all $(\alpha,\lambda)\in\C^{\nu-1}$ 
such that equation \eqref{eq:evenschroedinger} admits a solution subdominant
in non-adjacent Stokes sectors 
\begin{align}\label{eq:smoothone}
S_{j_1},S_{j_2},\dots,S_{2m}
\end{align}
with $j_{k+m} = j_{k}+\nu$ and $1\leq m \leq \nu/2.$
Then $\Sigma$ is a smooth complex analytic submanifold of $\C^{\nu-1}$
of the codimension $m.$
\end{lemma}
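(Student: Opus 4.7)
The plan is to realise $\Sigma$ locally as the common zero set of exactly $m$ analytic equations whose differentials are linearly independent, and then conclude by the holomorphic implicit function theorem. The main ingredients are Sibuya-normalised subdominant solutions and their Wronskians.

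First I would, by Sibuya's theorem, pick in each sector $S_j$ an analytic family $y_j(z;\alpha,\lambda)$ of solutions of \eqref{eq:evenschroedinger} subdominant in $S_j$, unique up to a non-vanishing analytic scalar; using the evenness of $P_\alpha$ I would normalise so that $y_{j+\nu}(z;\alpha,\lambda)=y_j(-z;\alpha,\lambda)$. Introduce the $z$-independent analytic Wronskians $W_{p,q}(\alpha,\lambda)=y_p y_q'-y_p' y_q$, which vanish precisely when $y_p\propto y_q$. I would then prove that $\Sigma$ coincides with the zero locus of the $m$ Wronskians
$$W_{j_1,j_2},\;W_{j_1,j_3},\;\ldots,\;W_{j_1,j_m},\;W_{j_1,j_1+\nu}.$$
One inclusion is immediate. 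For the reverse, the vanishing of $W_{j_1,j_k}$ for $k=2,\ldots,m$ forces $y_{j_1}\propto\cdots\propto y_{j_m}$ to a common solution $y$, and $W_{j_1,j_1+\nu}=0$ then forces $y(-z)\propto y(z)$, so $y$ is either even or odd. By the fixed normalisation, $y_{j_k+\nu}(z)=y_{j_k}(-z)\propto y(-z)\propto y(z)$ for $k\geq 2$, so the remaining $m-1$ proportionalities hold automatically and $y$ is subdominant in all $2m$ required sectors.

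The hard part is showing that the $m$ chosen Wronskians have linearly independent differentials at every point of $\Sigma$. For this I would derive, by differentiating $-y_p''+(P_\alpha-\lambda)y_p=0$ in the parameters and integrating by parts along a contour $L_{p,q}$ from $\infty\in S_p$ to $\infty\in S_q$ (the boundary terms vanish because $y_p$ and $y_q$ are subdominant at the two ends of $L_{p,q}$), the formulas
$$\partial_\lambda W_{p,q}=\int_{L_{p,q}} y_p y_q\,dz,\qquad \partial_{\alpha_{2i}} W_{p,q}=-\int_{L_{p,q}} z^{2i}\,y_p y_q\,dz.$$
A non-trivial linear dependence among the $m$ differentials would then yield a non-zero even polynomial $Q(z)$ of degree $\leq d$ with $\int_{L_{j_1,j_k}} Q(z)\,y_{j_1}y_{j_k}\,dz=0$ for each of the $m$ chosen pairs simultaneously. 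I would rule this out by the same sectorial asymptotics/linear-independence argument used in the general polynomial case of \cite{pergabrielov}, exploiting that the product functions $y_{j_1}y_{j_k}$ decay along distinct pairs of infinite rays. The role of the evenness of the parameter space is precisely to reduce the naive count of $2m-1$ independent proportionality equations to exactly $m$: once restricted to even $\alpha$, the $m-1$ Wronskians $W_{j_1,j_k+\nu}$ for $k\geq 2$ become dependent on the chosen $m$ and so do not contribute additional conditions.
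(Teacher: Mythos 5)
Your route is genuinely different from the paper's: the paper does not touch Wronskians at all, but instead maps $(\alpha,\lambda)$ to the equivalence class of the tuple of asymptotic values of $f=y/y_1$ modulo M\"obius transformations, invokes Bakken's theorem that this map is submersive (locally invertible on $\{\alpha_{d-1}=0\}$), and then simply counts the codimension of the locus $\{w_{j+\nu}=-w_j,\ w_{j_1}=\dots=w_{j_m}=0\}$ after quotienting by the residual $\C^*$-stabilizer. Bakken's theorem is precisely the device that absorbs the transversality issue you are trying to handle by hand. Your first half is fine: identifying $\Sigma$ with the common zero locus of the $m$ Wronskians $W_{j_1,j_2},\dots,W_{j_1,j_m},W_{j_1,j_1+\nu}$, and the argument that the remaining $m-1$ proportionalities follow from the normalisation $y_{j+\nu}(z)=y_j(-z)$, is correct and is a clean way to get the count of $m$ equations.

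The gap is in the linear-independence step, and it is twofold. First, your reduction is logically reversed: a non-trivial linear dependence among the $m$ differentials is a relation among the \emph{rows} of the $m\times(\nu-1)$ Jacobian, i.e.\ constants $c_1,\dots,c_m$, not all zero, with $\sum_k c_k\int_{L_{j_1,j_k}} z^{2i}\,y_{j_1}y_{j_k}\,dz=0$ for every monomial $z^{2i}$ --- a single vanishing linear combination of integrals over \emph{different} contours. What you write down instead --- one polynomial $Q$ with $\int_{L_{j_1,j_k}}Q\,y_{j_1}y_{j_k}\,dz=0$ for every $k$ --- is a relation among the \emph{columns}; since $m\le\nu/2<\nu-1$, such a $Q$ always exists for rank reasons, so ruling it out is impossible and proving nothing about it is needed. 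Second, even the correctly stated claim (no non-trivial row relation) is the genuinely hard analytic content of the lemma: on $\Sigma$ all the products $y_{j_1}y_{j_k}$ are proportional to one $y^2$, so you must show that the $m$ functionals $Q\mapsto\int_{L_{j_1,j_k}}Q\,y^2\,dz$ on even polynomials of degree $\le d-2$ are linearly independent, a generalized moment-type statement that does not follow from ``the product functions decay along distinct pairs of rays.'' The reference you lean on for this step does not contain such an argument --- it proves smoothness by the same Nevanlinna/Bakken submersion argument as this paper --- so as written the proposal has no proof of the one claim that carries the lemma. (A smaller point to record if you pursue this route: the variational formula $\partial_t W_{p,q}=\int_{L_{p,q}}\partial_t(P_\alpha-\lambda)\,y_py_q\,dz$ is only valid at points where $W_{p,q}=0$, since otherwise the boundary terms in the integration by parts do not vanish; that is harmless here because you only evaluate differentials on $\Sigma$, but it should be stated.)
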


\begin{proof}
We consider the space $\C^{\nu-1}$ as a subspace of the space $\C^{n-2}$ of
all $(\alpha,\lambda)$ corresponding to the general 
polynomial potentials in \eqref{eq:generalschroedinger}, with $\alpha=(\alpha_1,\dots,\alpha_{d-1})$.
Let $f$ be a ratio of two linearly independent solutions of \eqref{eq:generalschroedinger},
and let $w=(w_0,\dots,w_{n-1})$ be the set of the asymptotic values of $f$
in the Stokes sectors $S_0,\dots,S_{n-1}$.

Then $w$ belongs to the subset $Z$ of ${\Cb}^{n-1}$
where the values $w_j$ in adjacent Stokes sectors are distinct and there are
at least three distinct values among $w_j$.
The group $G$ of fractional-linear transformations of $\Cb$ acts on $Z$ diagonally, 
and the quotient $Z/G$ is a $(n-3)$-dimensional complex manifold.

Theorem 7.2, \cite{bakken} implies that the mapping $W:\C^{n-2}\to Z/G$
assigning to $(\alpha,\lambda)$ the equivalence class of $w$ is submersive.
More precisely, $W$ is locally invertible on the subset $\{\alpha_{d-1}=0\}$ of $\C^{n-2}$

For an even potential, there exists an odd function $f.$
The corresponding set of asymptotic values satisfies $\nu$ linear conditions
$w_{j+\nu}=-w_j$ for $j=0,\dots,\nu-1$.
For $(\alpha,\lambda)\in\Sigma$, we can assume that $S_{j_1},\dots,S_{j_m}$
are subdominant sectors for $f$.
This adds $m$ linearly independent conditions
$w_{j_1}=\dots=w_{j_m}=0.$
Let $Z_0$ be the corresponding subset of $Z$.
Its codimension in $Z$ is $\nu+m$.
The one-dimensional subgroup $\C^*$ of $G$ consisting of multiplications
by non-zero complex numbers preserves $Z_0$,
and $gZ_0\cap Z_0=\emptyset$ for each $g\in G\setminus \C^*$.
The explaination is as follows:

Since we have at least two subdominant sectors,
only fractional linear transforms that preserves 0 are allowed.
Furthermore, there exists a
sector $S_k$ with the value $w_k$ different from 0 and $\infty$
(otherwise we would have only two asymptotic values).
There is a unique transformation, multiplication by $w_k^{-1}$,
preserving 0 and sending $\pm w_k$ to $\pm 1$. This implies that
the only transformation preserving 0 and sending $\pm w_k$
to another pair of opposite numbers is multiplication by
a non-zero constant.

Hence $GZ_0$ is a $G$-invariant submanifold of $Z$ of codimension $\nu+m-2$,
and its image $Y_0\subset Y$ is a smooth submanifold of codimension $\nu+m-2$.
Due to Bakken's theorem, $W^{-1}(Y_0)$ intersected with the $(n-3)$-dimensional
space of $(\alpha,\lambda)$ with $\alpha_{d-1}=0$ is a smooth submanifold
of codimension $\nu+m-2$, dimension $\nu-m-1$.
Accordingly, it is a smooth submanifold of codimension $m$
of the space $\C^{\nu-1}$. 
\end{proof}

\begin{proposition}\label{thm:connected}
Let $\Sigma$ be as in Lemma \ref{lemma:smooth}.
If at least two adjacent Stokes sectors are missing in \eqref{eq:smoothone},
then $\Sigma$ consists of two irreducible complex analytic manifolds.
\end{proposition}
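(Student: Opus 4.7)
My plan is to identify the connected components of $\Sigma$ with orbits of even braid actions on centrally symmetric standard graphs and to show that there are exactly two such orbits. By Lemma \ref{lemma:smooth}, $\Sigma$ is a smooth complex analytic submanifold of $\C^{\nu-1}$, so its irreducible and connected components coincide; it therefore suffices to prove that $\Sigma$ has exactly two connected components. The key input is the monodromy correspondence of \cite{gabrielov}: as $(\alpha,\lambda)$ varies along a path in $\Sigma$, the associated centrally symmetric standard graph $\Gamma$ changes by a monodromy transformation, and the group generated by these monodromies is generated by the even actions $\actE_j^{\pm 2}$. Consequently, the connected components of $\Sigma$ are in bijection with the orbits of even actions on the set of centrally symmetric standard graphs satisfying \eqref{eq:smoothone}.

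For the lower bound, the earlier Corollary asserts that the centre of a centrally symmetric standard graph is either a vertex (a root junction) or the midpoint of a double edge, and the lemma preceding it ensures that this dichotomy is preserved by every $\actE_j$. Hence there are at least two even-action orbits.

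For the upper bound, the hypothesis that two adjacent Stokes sectors are absent from \eqref{eq:smoothone} is exactly the statement that $\Gamma$ possesses two adjacent dominant faces. I therefore invoke the Corollary following Lemma \ref{lemma:contracttwo} to reduce, by a sequence of even actions, an arbitrary $\Gamma$ to ivy form with no $Y$-structures, namely a graph consisting only of $I$- and $V$-structures attached to the one or two root junctions. The multiset of $I$- and $V$-structures is an invariant of the even orbit, by the Corollary to Lemma \ref{lemma:boundedfaces} together with the fact that the dominant/subdominant pattern of sectors is fixed by the boundary conditions. Moreover the cyclic arrangement of these structures around each root junction is forced by the cyclic order of sectors in \eqref{eq:smoothone}. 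This gives at most one canonical graph per center type, hence at most two orbits, completing the count.

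The main obstacle I anticipate is the rigidity claim in the upper bound: once all $Y$-structures have been eliminated, one must verify that no further freedom remains beyond the center-type dichotomy, and in particular that two distinct canonical ivy forms of the same center type cannot arise from the same boundary data. This rests on the fact that the availability of two adjacent dominant sectors provides enough commuting pairs among the $\actA_j$'s to realise, by even combinations, every rearrangement produced by the non-even moves of \cite{pergabrielov}. It is precisely the failure of this extra room when $m=\nu/2$ that produces the infinitely many components in the second half of Theorem \ref{thm:mainone}.
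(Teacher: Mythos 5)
Your proposal is correct and follows essentially the same route as the paper: identify the connected components with orbits of the monodromy (even-action) group on centrally symmetric standard graphs via the covering $W:\Sigma\to Y_0$, get the lower bound of two from the centre-type (vertex versus double edge) invariant, and get the upper bound by contracting to a unique canonical ivy-form graph per centre type using Lemma \ref{lemma:contracttwo} and its Corollary. The only presentational difference is that the paper makes explicit the Nevanlinna-theoretic step that each standard graph corresponds to a unique point $(\alpha,\lambda)$, which is what justifies identifying the fiber of $W$ with the set of standard graphs; you appeal to this implicitly as ``the monodromy correspondence.''
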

\begin{proof}

Nevanlinna theory (see \cite{nevanlinnaU,nevanlinnaF}),
implies that, for each symmetric standard graph $\Gamma$
with the properties listed in Lemma \ref{lemma:labelingprop},
there exists $(\alpha,\lambda)\in\C^{n-1}$ and an \emph{odd} meromorphic function $f(z)$ such that
$f$ is the ratio of two linearly independent solutions of (\ref{eq:evenschroedinger})
with the asymptotic values $w_j$ in the Stokes sectors $S_j$, and $\Gamma$ is the graph
corresponding to the cell decomposition $\Phi_0=f^{-1}(\Psi_0)$.
This function, and the corresponding point $(\alpha,\lambda)$ is defined uniquely.

Let $W:\Sigma\to Y_0$ be as in the proof of Lemma \ref{lemma:smooth}.
Then $\Sigma$ is an unramified covering of $Y_0$. 
Its fiber over the equivalence class of $w\in Y_0$ consists of the points
$(\alpha_\Gamma,\lambda_\Gamma)$ for all standard graphs $\Gamma$.
Each action $\actA_j^2$ corresponds to a closed loop in 
$Y_0$ starting and ending at $w$. It should be noted that $Y_0$ is a connected manifold.
Since for a given list of subdominant sectors
a standard graph with one vertex is unique,
Theorem \ref{lemma:contracttwo} implies that the monodromy group has two orbits;
odd and even eigenfunctions cannot be exchanged by any path in $Y_0$, 
while any odd (even) can be transferred into any other odd (even) eigenfunction
by a sequence of $E_k^{\pm 2}$.

Hence $\Sigma$ consists of two irreducible connected components (see, e.g., \cite{khovanskii}).
\end{proof}
This immediately implies Theorem \ref{thm:mainone}, for $m < \nu/2.$
The following propostion implies the case where $m = \nu/2.$

\begin{proposition}\label{thm:disconnected}
Let $\Sigma$ be the space of all $(\alpha,\lambda)\in\C^{\nu-1}$, for even $\nu,$
such that equation \eqref{eq:evenschroedinger} admits a solution subdominant in every other Stokes 
sector, that is, in $S_0,S_{2},\dots,S_{n-2}.$

Then irreducible components $\Sigma_k,$ $k=0,\,1,\dots$ of $\Sigma$, which are also its connected
components, are in one-to-one correspondence with the sets of centrally symmetric standard
graphs with $k$ bounded faces.
The corresponding solution of \eqref{eq:evenschroedinger} has $k$ zeros
and can be represented as $Q(z)e^{\phi(z)}$ where $Q$ is a polynomial
of degree $k$ and $\phi$ a polynomial of degree $(d+2)/2$.
\end{proposition}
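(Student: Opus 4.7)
The plan is to combine the covering-space framework of Proposition \ref{thm:connected} with the invariance of the bounded-face count under even actions. First, I would realize $\Sigma$ as an unramified covering of the connected manifold $Y_0$ constructed in the proof of Lemma \ref{lemma:smooth}, whose fiber over each equivalence class $[w]$ is identified with the set $\mathcal{G}$ of centrally symmetric standard graphs having subdominant sectors $S_0,S_2,\dots,S_{n-2}$, and whose deck transformations are generated by the even actions $\actE_j^{\pm 2}$. The connected components of $\Sigma$ are then in bijection with the orbits of this action on $\mathcal{G}$, and each orbit yields an irreducible component as it is the image of a connected covering of a smooth irreducible base.

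In this setting the dominant sectors are precisely the odd-indexed ones $S_1,S_3,\dots,S_{n-1}$, so no two are adjacent. Lemma \ref{lemma:boundedfaces} and its corollary therefore apply: every even action preserves the bounded-face count $k$. Hence the partition of $\mathcal{G}$ by $k$ is a coarsening of the orbit partition, producing a countable family of pairwise disjoint components $\Sigma_k$ for $k=0,1,2,\dots$.

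For the reverse inclusion I would show that any two graphs $\Gamma,\Gamma'\in\mathcal{G}$ with the same count $k$ lie in the same even-action orbit. By Theorem \ref{thm:toIVY} both may be brought to ivy form, and by Lemma \ref{lemma:toTwoY} to ivy form with at most two $Y$-structures; both reductions use only even actions and preserve $k$. Because no two dominant faces are adjacent, no $I$-structures occur, and the surviving $V$- and $Y$-structures may be cyclically reordered around each root junction via part (1) of Lemma \ref{lemma:contracttwo}. It then remains to identify a canonical normal form for such ivy graphs depending only on $k$ and to verify that every ivy graph with bounded-face count $k$ reduces to it. This transitivity step is the main obstacle: while $k$ is an obvious invariant by Lemma \ref{lemma:boundedfaces}, ruling out further invariants requires careful bookkeeping of how each $Y$- and $V$-structure contributes to $k$ while being permuted by even actions.

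Finally, for the representation of the eigenfunction, I would argue that the entire function $y$ satisfying $-y''+(P_\alpha-\lambda)y=0$ with $P_\alpha$ of degree $d$ is of order $(d+2)/2$ at infinity, by the classical asymptotics of Sibuya \cite{sibuya}. The bounded faces of $\Gamma$ correspond bijectively to the simple zeros of $f=y/y_1$, which are exactly the zeros of $y$ since $y_1$ does not vanish where $y$ does (by linear independence). So $y$ has exactly $k$ zeros, and Hadamard factorization gives $y(z)=Q(z)e^{\phi(z)}$ with $Q$ a polynomial of degree $k$ and $\phi$ entire; the order of $y$ then forces $\phi$ to be a polynomial of degree $(d+2)/2=\nu$.
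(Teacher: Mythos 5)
Your overall strategy is the same as the paper's: realize $\Sigma$ as an unramified covering of the connected base $Y_0$, identify connected components with orbits of the monodromy action on the fiber of standard graphs, use Lemma \ref{lemma:boundedfaces} to see that the bounded-face count $k$ is an orbit invariant, and finish with the zero count and a factorization $y=Qe^{\phi}$. Your ending is a legitimate variant: you get $\deg\phi=(d+2)/2$ from Hadamard factorization plus the order of $y$ (via Sibuya's asymptotics), whereas the paper reads it off from the fact that $y/Q$ is a zero-free entire function subdominant in $(d+2)/2$ sectors; both work. (One terminological slip: the relevant group is the monodromy action on the fiber, not the deck group -- the covering need not be normal -- and the passage from ``one orbit'' to ``one irreducible component, which is also a connected component'' needs the smoothness of $\Sigma$ from Lemma \ref{lemma:smooth}, which you should invoke explicitly.)

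The genuine gap is precisely the step you flag yourself and then leave open: you never prove that two graphs with the same $k$ lie in the same orbit, i.e., that $k$ is a \emph{complete} invariant. What you establish is only that the partition of the fiber by $k$ is coarser than the orbit partition, which a priori allows several components for a single value of $k$; that is strictly weaker than the stated one-to-one correspondence. The paper closes this by observing that the output of Lemma \ref{lemma:toTwoY} -- a centrally symmetric ivy-form graph with at most two $Y$-structures attached at the root junction(s) -- is completely determined by $k$: since no two dominant faces are adjacent there are no $I$-structures, the list of subdominant sectors fixes the arrangement of $V$-structures around the root junction(s), and the only remaining freedom is the size of the centrally symmetric pair of $Y$-structures, which is exactly what the invariant $k$ records (with $k=0$ giving the unique graph with only root junctions). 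You need to supply this uniqueness-of-normal-form argument; the cyclic reordering of structures via Lemma \ref{lemma:contracttwo}(1) that you mention is an ingredient, but by itself it does not rule out further discrete invariants beyond $k$.
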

\begin{proof}
Let us choose $w$ and $\Psi_0$ as in the proof of Proposition \ref{thm:connected}.
Repeating the arguments in the proof of Proposition \ref{thm:connected}, 
we obtain an unramified covering $W:\Sigma\to Y_0$ such that its fiber over $w$ 
consists of the points $(\alpha_\Gamma,\lambda_\Gamma)$ for all standard graphs
$\Gamma$ with the properties listed in Lemma \ref{lemma:labelingprop}.

Since we have no adjacent dominant sectors, Lemma \ref{lemma:toTwoY}
implies that any standard graph $\Gamma$ can be transformed by the monodromy 
action to a graph $\Gamma_0$ in ivy form with at most two $Y$-structures
attached at the root junction(s) of type $j$ and $j+\nu.$

Lemma \ref{lemma:boundedfaces} implies that
$\Gamma$ and $\Gamma_0$ have the same number $k$ of bounded faces.
If $k=0$, the graph $\Gamma_0$ is unique.
If $k>0$, the graph $\Gamma_0$ is completely determined by $k.$
Hence for each $k=0,1,\dots$ there is a unique
orbit of the monodromy group action on the fiber of $W$
over $w$ consisting of all standard graphs $\Gamma$ with $k$ bounded faces.
This implies that $\Sigma$ has one irreducible component
for each $k$. 

Since $\Sigma$ is smooth by Lemma \ref{lemma:smooth}, its irreducible 
components are also its connected components.

Finally, let $f_\Gamma=y/y_1$ where $y$ is an odd solution of \eqref{eq:evenschroedinger}
subdominant in the Stokes sectors $S_0,S_2,\dots,S_{n-2}$.
Then the zeros of $f$ and $y$ are the same, each such zero belongs to
a bounded domain of $\Gamma$, and each bounded domain of $\Gamma$ contains
a single zero. Hence $y$ has exactly $k$ simple zeros.
Let $Q$ be a polynomial of degree $k$ with the same zeros as $y$.
Then $y/Q$ is an entire function of finite order without zeros,
hence $y/Q=e^\phi$ where $\phi$ is a polynomial.
Since $y/Q$ is subdominant in $(d+2)/2$ sectors, $\deg\phi=(d+2)/2$. 
\end{proof}

\section{Illustrating example}

We will now give a small example on how to apply the method given in the previous section, 
Theorem \ref{thm:toIVY} and Lemma \ref{lemma:contracttwo}.
Let $\Gamma$ be as in Fig.~\ref{fig:action1}a. From subsection \ref{subsec:labelingprop}, we have that
a dominant face with label $j$ have $j\mhyp$edges as boundaries. 
Hence the faces $0$ and $4$ are subdominant. 
Also, the direction of the edges are directed counterclockwise in each of the dominant faces.
\begin{figure}[!ht]
\centering
  \subfloat[(a)]{\includegraphics[width=0.49\textwidth]{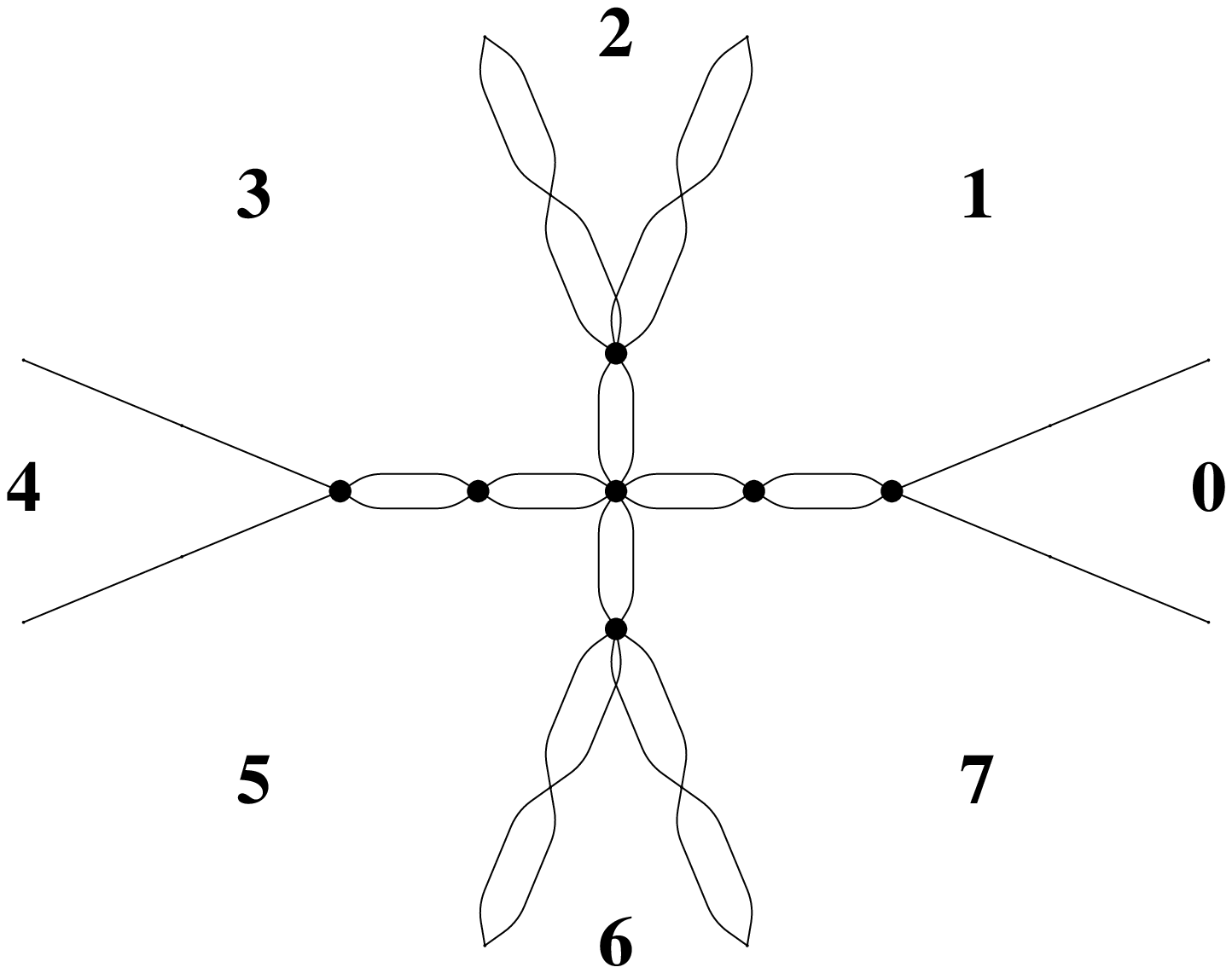}}
  \subfloat[(b)]{\includegraphics[width=0.49\textwidth]{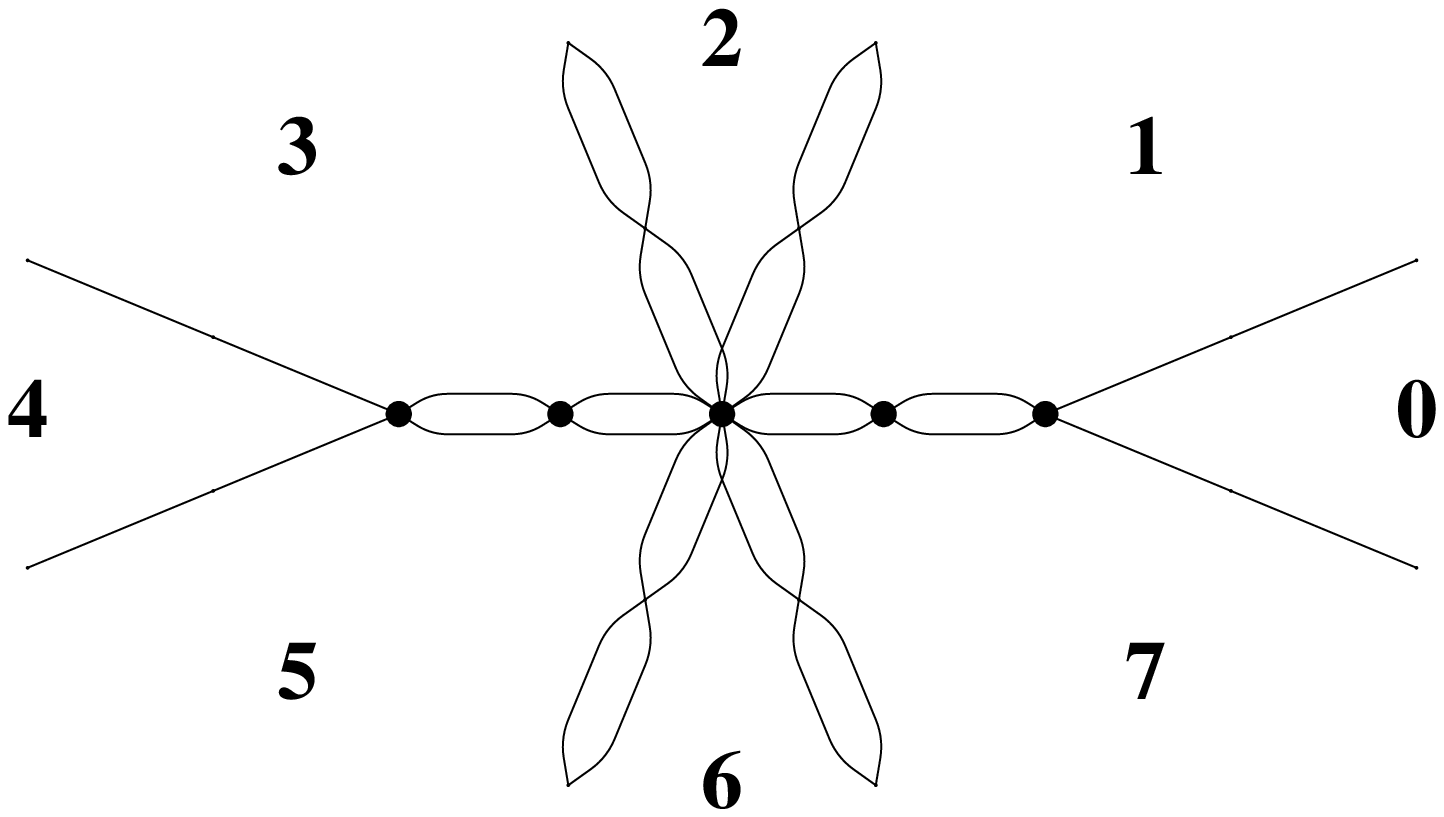}}
 \caption{The graphs $\Gamma$ and  $\actE_1^2(\Gamma)$}
\label{fig:action1}
\end{figure}
Applying $\actE_1^2,$ moves the $I\mhyp$structure at the $1\mhyp$junction one step to the right,
following the $1\mhyp$edges. Similarly, the $I\mhyp$structure at the $5\mhyp$junction moves one step to the left.
Therefore, $\actE_1^2(\Gamma)$ is given in Fig.~\ref{fig:action1}b. 
The graph $\actE_1^2(\Gamma)$ is now in ivy form, it consists of a center junction connected to four 
$I\mhyp$structures and two $Y\mhyp$structures. 
We proceed by using the algorithm in Lemma \ref{lemma:contracttwo}, and apply $\actE_1^2$ two times more.
These steps are given in Fig.~\ref{fig:action2}.
\begin{figure}[!ht]
\centering
  \subfloat[(a)]{\includegraphics[width=0.49\textwidth]{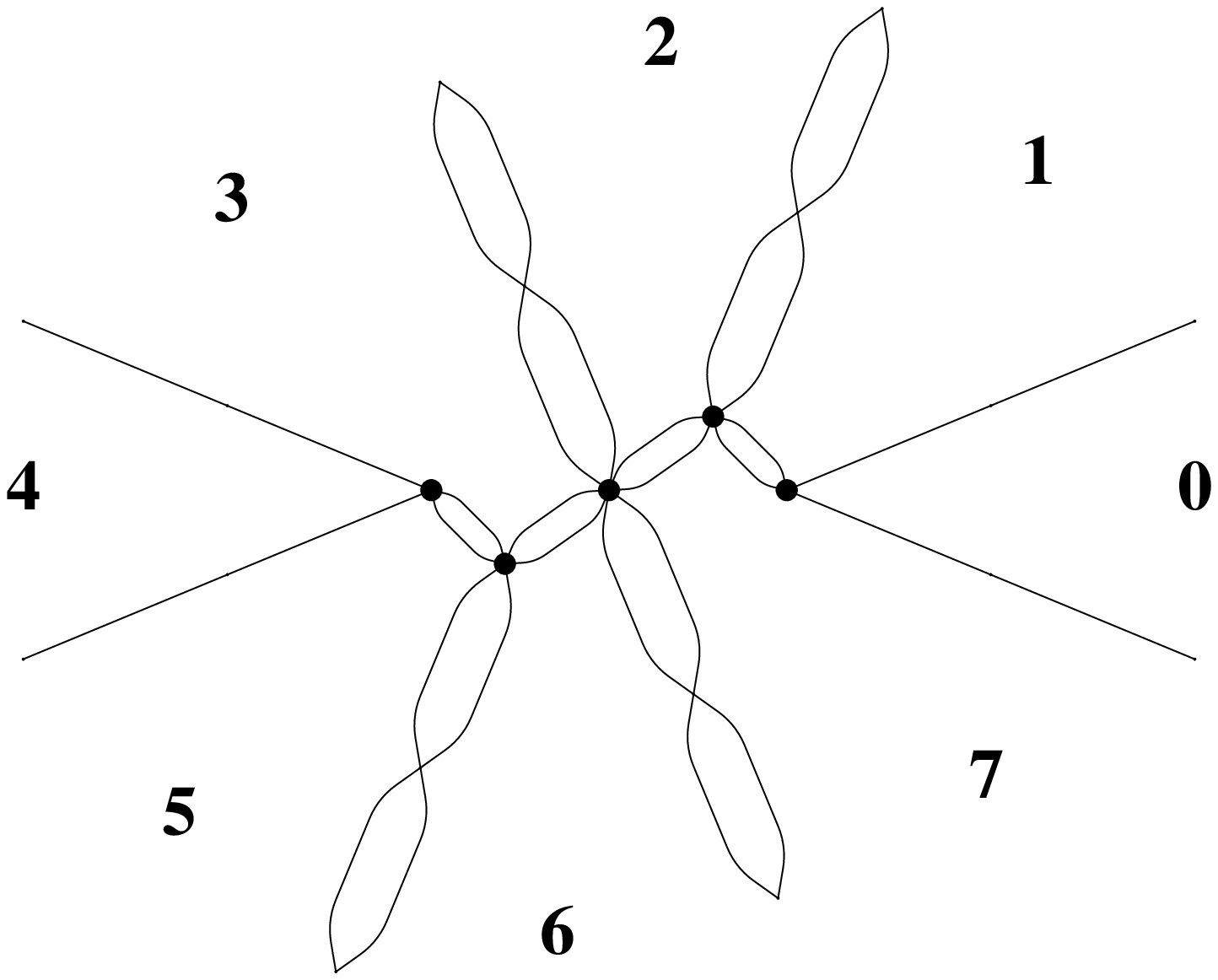}}
  \subfloat[(b)]{\includegraphics[width=0.49\textwidth]{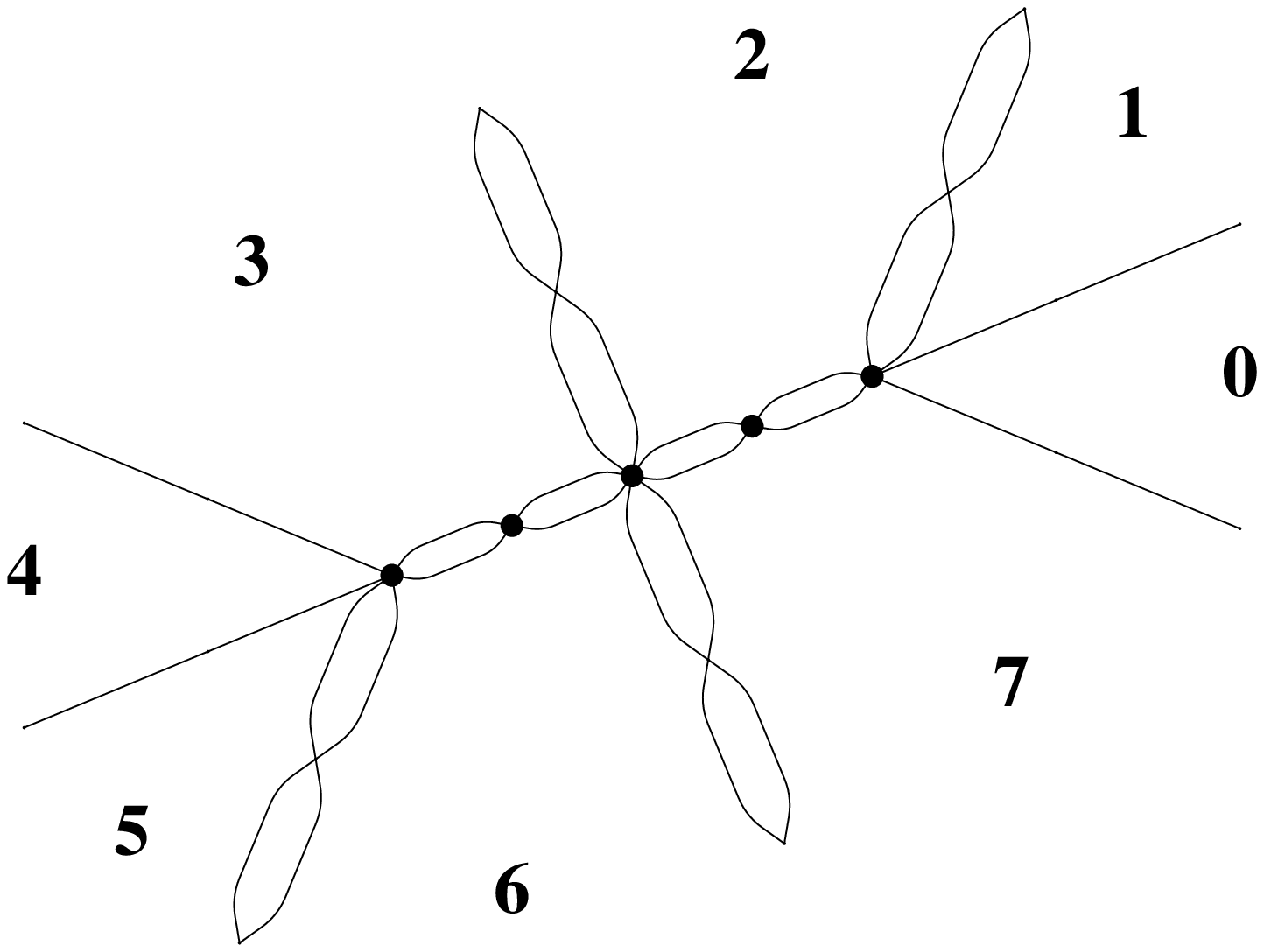}}
 \caption{The graphs $\actE_1^4(\Gamma)$ and  $\actE_1^6(\Gamma)$}
\label{fig:action2}
\end{figure}
The next step in the lemma is to move the newly created $V\mhyp$structures to the center junction.
We therefore apply $\actE_3^2$ two times. These final steps are presented in Fig.~\ref{fig:action3},
and we have reached the unique graph with only one junction.
\begin{figure}[!ht]
\centering
  \subfloat[(a)]{\includegraphics[width=0.49\textwidth]{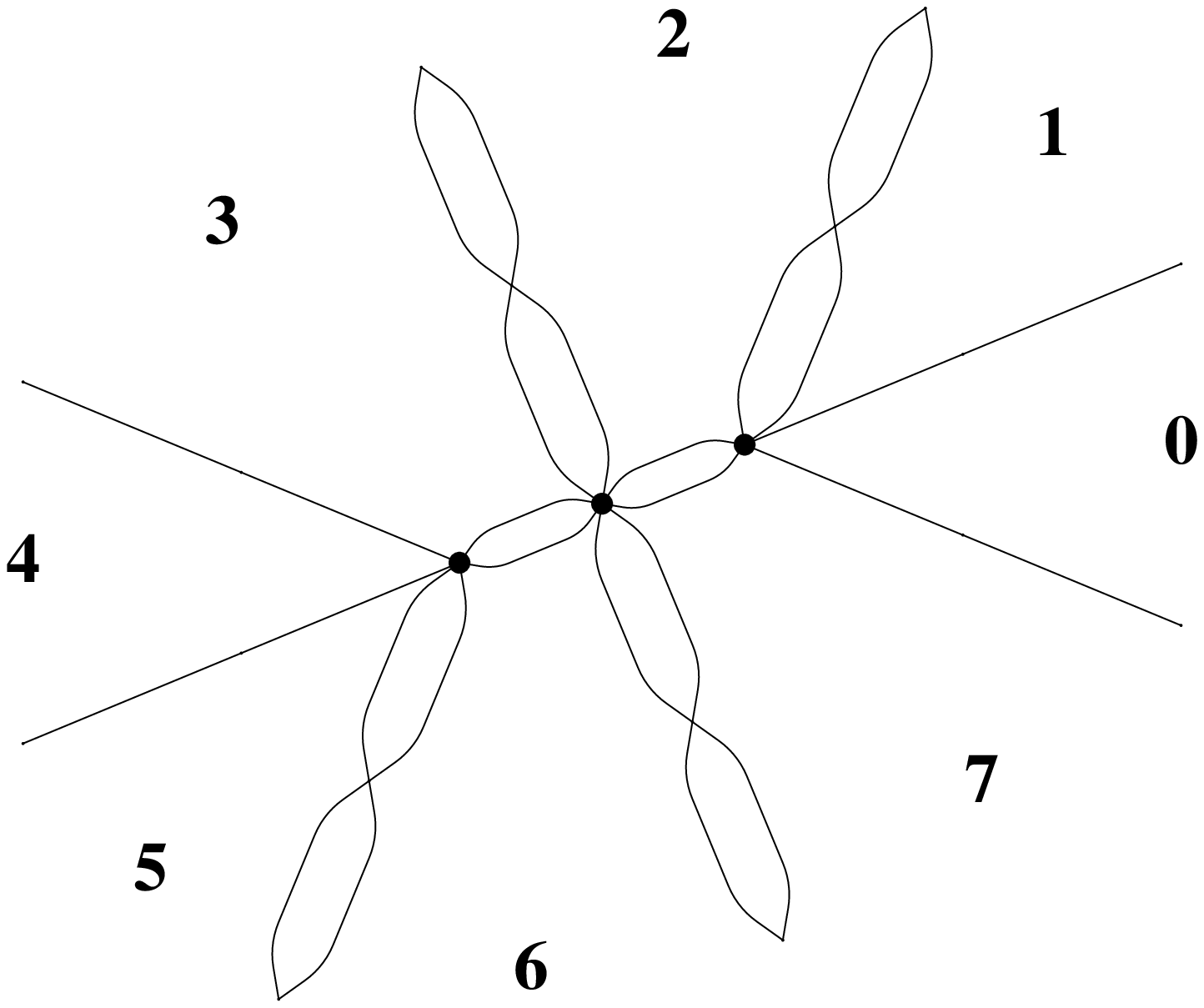}}
  \subfloat[(b)]{\includegraphics[width=0.49\textwidth]{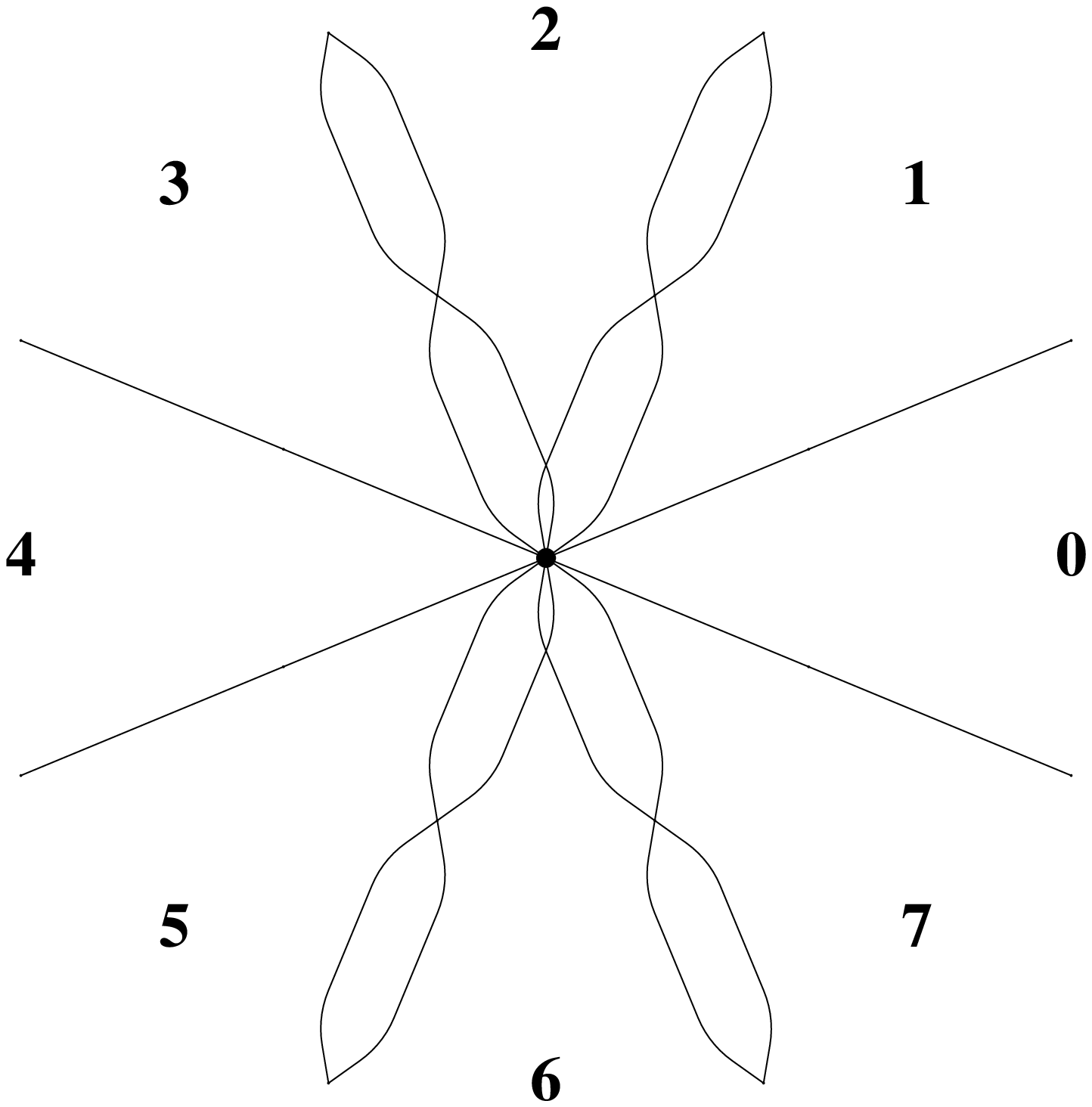}}
 \caption{The graphs $\actE_3^2\actE_1^6(\Gamma)$ and  $\actE_3^4\actE_1^6(\Gamma)$}
\label{fig:action3}
\end{figure}

% \begin{figure}[!h]
%   \centering
%     \includegraphics[width=0.48\textwidth]{}
% \caption{Graph $\actE_1^2(\Gamma)$}\label{fig:action2}
% \end{figure}

\newpage

\bibliographystyle{alpha}
\bibliography{/home/paxinum/public/resources/latex/bibliography} 

\end{document}

% --- supplement: xypics/si_action_case1_before.tex ---

\pagestyle{empty}

\xymatrix @-1pc {
\cdot & \cdots & \cdot & & \cdot & \cdots & \cdot & & \cdot & \cdots & \cdot & & \cdot & \cdots & \cdot\\
 &\bullet \ar@{..}[l]  \ar@{--}[ul] \ar@{--}[u] \ar@{--}[ur] & & & 
 &\bullet \ar@{--}[ul] \ar@{--}[u] \ar@{--}[ur] \ar@{->}[llll]^{\edge_{j}} & & &
 &\bullet_\junc \ar@{--}[ul] \ar@{--}[u] \ar@{--}[ur] \ar@{->}[llll]^{\edge_{j}} & & &
 &\bullet \ar@{..}[r] \ar@{--}[ul] \ar@{--}[u] \ar@{--}[ur] \ar@{->}[llll]^{\edge_{j_+}} & & & \\
\\
& & & & & w_{j} & & &  & \bullet \ar@/^/@{->}[uu]^{\edge_{j}} \ar@/_/@{<-}[uu]_{\edge_{j_+}} & & w_{j_+} & & \\
\\
& & & &  &  & & &  & \bullet \ar@/^/@{->}[uu]^{\edge_{j}} \ar@/_/@{<-}[uu]_{\edge_{j_+}} & &   \\
& & & & & & & &  &  \ar@{..}[u] & & & & \\
}

% --- supplement: xypics/si_action_case1_final.tex ---

\pagestyle{empty}

\xymatrix @-1pc {
\cdot & \cdots & \cdot & & \cdot & \cdots & \cdot & & \cdot & \cdots & \cdot & & \cdot & \cdots & \cdot\\
 &\bullet \ar@{..}[l]  \ar@{--}[ul] \ar@{--}[u] \ar@{--}[ur] & & & 
 &\bullet \ar@{--}[ul] \ar@{--}[u] \ar@{--}[ur] \ar@{->}[llll]^{\edge_{j}} & & &
 &\bullet_\junc \ar@{--}[ul] \ar@{--}[u] \ar@{--}[ur] \ar@{->}[llll]^{\edge_{j_+}} & & &
 &\bullet \ar@{..}[r] \ar@{--}[ul] \ar@{--}[u] \ar@{--}[ur] \ar@{->}[llll]^{\edge_{j_+}} & & & \\
\\
& & & &  & \bullet \ar@/^/@{->}[uu]^{\edge_{j}} \ar@/_/@{<-}[uu]_{\edge_{j_+}} & & & & \\
\\
& & w_{j} & &  & \bullet \ar@/^/@{->}[uu]^{\edge_{j}} \ar@/_/@{<-}[uu]_{\edge_{j_+}} & & &&& w_{j_+} \\
& & & &  &  \ar@{..}[u] & & & & \\
}

% --- supplement: xypics/si_action_case1_middle.tex ---

\pagestyle{empty}

\xymatrix @-1pc {
\cdot & \cdots & \cdot & & \cdot & \cdots & \cdot & & \cdot & \cdots & \cdot & & \cdot & \cdots & \cdot\\
 &\bullet \ar@{..}[l]  \ar@{--}[ul] \ar@{--}[u] \ar@{--}[ur] & & & 
 &\bullet \ar@{--}[ul] \ar@{--}[u] \ar@{--}[ur] \ar@{->}[llll]^{\edge_{j_+}} & & &
 &\bullet_\junc \ar@{--}[ul] \ar@{--}[u] \ar@{--}[ur]  & & &
 &\bullet \ar@{..}[r] \ar@{--}[ul] \ar@{--}[u] \ar@{--}[ur] \ar@{->}[llll]^{\edge_{j}} & & & \\
\\
& & & w_{j_+} & & &  & \bullet \ar@{->}[uull]^{\edge_{j_+}} \ar@{<-}[uurr]_{\edge_{j}} & & & w_{j}  & \\
\\
& &  &  & & &  & \bullet \ar@/^/@{->}[uu]^{\edge_{j_+}} \ar@/_/@{<-}[uu]_{\edge_{j}} & &   \\
& & & & & &  &  \ar@{..}[u] & & & & \\
}

% --- supplement: xypics/si_action_case2_before.tex ---

\pagestyle{empty}

\xymatrix @-1pc {
\cdot & \cdots & \cdot & & \cdot & \cdots & \cdot & & \cdot & \cdots & \cdot & & \cdot & \cdots & \cdot\\
 &\bullet \ar@{..}[l]  \ar@{--}[ul] \ar@{--}[u] \ar@{--}[ur] & & & 
 &\bullet \ar@{--}[ul] \ar@{--}[u] \ar@{--}[ur] \ar@{->}[llll]^{\edge_{j}} & & &
 &\bullet_\junc \ar@{--}[ul] \ar@{--}[u] \ar@{--}[ur] \ar@{->}[llll]^{\edge_{j}} & & &
 &\bullet \ar@{..}[r] \ar@{--}[ul] \ar@{--}[u] \ar@{--}[ur] \ar@{->}[llll]^{\edge_{j_+}} & & & \\
\\
&&&& & & & &  & \bullet \ar@/^/@{->}[uu]^{\edge_{j}} \ar@/_/@{<-}[uu]_{\edge_{j_+}} & & & & \\
&&&&&w_j & & & & \bullet \ar@{..}[u] ^k &&&&w_{j_+}\\
\\
&&&& &  & & &  \bullet \ar@{->}[uur]^{\edge_{j}} &  & \bullet \ar@{<-}[uul]_{\edge_{j_+}} & & & & & & \\
&&&&& & & \ar@{..}[ur] &  & w_{j+1}=0  & & \ar@{..}[ul] & & \\
}

% --- supplement: xypics/si_action_case2_final.tex ---

\pagestyle{empty}

\xymatrix @-1pc {
\cdot & \cdots & \cdot & & \cdot & \cdots & \cdot & & \cdot & \cdots & \cdot & & \cdot & \cdots & \cdot\\
 &\bullet \ar@{..}[l]  \ar@{--}[ul] \ar@{--}[u] \ar@{--}[ur] & & & 
 &\bullet \ar@{--}[ul] \ar@{--}[u] \ar@{--}[ur] \ar@{->}[llll]^{\edge_{j}} & & &
 &\bullet_\junc \ar@{--}[ul] \ar@{--}[u] \ar@{--}[ur] \ar@{->}[llll]^{\edge_{j_+}} & & &
 &\bullet \ar@{..}[r] \ar@{--}[ul] \ar@{--}[u] \ar@{--}[ur] \ar@{->}[llll]^{\edge_{j_+}} & & & \\
\\
 & & & &  & \bullet \ar@/^/@{->}[uu]^{\edge_{j}} \ar@/_/@{<-}[uu]_{\edge_{j_+}} & & & & \\
&&w_j & & & \bullet \ar@{..}[u] ^k &&&&w_{j_+}\\
\\
 &  & & &  \bullet \ar@{->}[uur]^{\edge_{j}} &  & \bullet \ar@{<-}[uul]_{\edge_{j_+}} & & & & & & \\
& & & \ar@{..}[ur] &  & w_{j+1}=0  & & \ar@{..}[ul] & & \\
}

% --- supplement: xypics/si_action_case2_middle.tex ---

\pagestyle{empty}

\xymatrix @-1pc {
\cdot & \cdots & \cdot & & \cdot & \cdots & \cdot & & \cdot & \cdots & \cdot & & \cdot & \cdots & \cdot\\
 &\bullet \ar@{..}[l]  \ar@{--}[ul] \ar@{--}[u] \ar@{--}[ur] & & & 
 &\bullet \ar@{--}[ul] \ar@{--}[u] \ar@{--}[ur] \ar@{->}[llll]^{\edge_{j_+}} & & &
 &\bullet_\junc \ar@{--}[ul] \ar@{--}[u] \ar@{--}[ur] & & &
 &\bullet \ar@{..}[r] \ar@{--}[ul] \ar@{--}[u] \ar@{--}[ur] \ar@{->}[llll]^{\edge_{j}} & & & \\
&& &     & & & & \bullet  \ar@{->}[ull]^{\edge_{j_+}}  \ar@{<-}[urr]_{\edge_{j}} &&&&&&&&\\
\\
&& & & & &  & \bullet \ar@/^/@{->}[uu]^{\edge_{j_+}} \ar@/_/@{<-}[uu]_{\edge_{j}} & & & & \\
&&&w_{j_+} & & & & \bullet \ar@{..}[u] ^{k} &&&&w_{j}\\
\\
&& &  & & &  \bullet \ar@{->}[uur]^{\edge_{j_+}} &  & \bullet \ar@{<-}[uul]_{\edge_{j}} & &  & & \\
&& & & & \ar@{..}[ur] &  & w_{j+1}=0  & & \ar@{..}[ul] & & \\
}

% --- supplement: xypics/si_action_case3_before.tex ---

\pagestyle{empty}

\xymatrix @-1pc {
\cdot & \cdots & \cdot & & \cdot & \cdots & \cdot & & \cdot & \cdots & \cdot & & \cdot & \cdots & \cdot\\
 &\bullet \ar@{..}[l]  \ar@{--}[ul] \ar@{--}[u] \ar@{--}[ur] & & & 
 &\bullet \ar@{--}[ul] \ar@{--}[u] \ar@{--}[ur] \ar@{->}[llll]^{\edge_{j}} & & &
 &\bullet_\junc \ar@{--}[ul] \ar@{--}[u] \ar@{--}[ur] \ar@{->}[llll]^{\edge_{j}} & & &
 &\bullet \ar@{..}[r] \ar@{--}[ul] \ar@{--}[u] \ar@{--}[ur] \ar@{->}[llll]^{\edge_{j_+}} & & & \\
\\
&&&& &w_j  & & &  \bullet \ar@{->}[uur]^{\edge_{j}} &  & \bullet \ar@{<-}[uul]_{\edge_{j_+}} & & w_{j_+} & & \\
&&&& & & & \ar@{..}[ur] &  & w_{j+1}=0  & & \ar@{..}[ul] & & \\
}

% --- supplement: xypics/si_action_case3_final.tex ---

\pagestyle{empty}

\xymatrix @-1pc {
\cdot & \cdots & \cdot & & \cdot & \cdots & \cdot & & \cdot & \cdots & \cdot & & \cdot & \cdots & \cdot\\
 &\bullet \ar@{..}[l]  \ar@{--}[ul] \ar@{--}[u] \ar@{--}[ur] & & & 
 &\bullet \ar@{--}[ul] \ar@{--}[u] \ar@{--}[ur] \ar@{->}[llll]^{\edge_{j}} & & &
 &\bullet_\junc \ar@{--}[ul] \ar@{--}[u] \ar@{--}[ur] \ar@{->}[llll]^{\edge_{j_+}} & & &
 &\bullet \ar@{..}[r] \ar@{--}[ul] \ar@{--}[u] \ar@{--}[ur] \ar@{->}[llll]^{\edge_{j_+}} & & & \\
\\
& w_j & & &  \bullet \ar@{->}[uur]^{\edge_{j}} &  & \bullet \ar@{<-}[uul]_{\edge_{j_+}} & & w_{j_+} & & \\
& & & \ar@{..}[ur] &  & w_{j+1}=0  & & \ar@{..}[ul] & & \\
}

% --- supplement: xypics/si_action_case3_middle.tex ---

\pagestyle{empty}

\xymatrix @-1pc {
\cdot & \cdots & \cdot & & \cdot & \cdots & \cdot & & \cdot & \cdots & \cdot & & \cdot & \cdots & \cdot\\
 &\bullet \ar@{..}[l]  \ar@{--}[ul] \ar@{--}[u] \ar@{--}[ur] & & & 
 &\bullet \ar@{--}[ul] \ar@{--}[u] \ar@{--}[ur] \ar@{->}[llll]^{\edge_{j_+}} & & &
 &\bullet_\junc \ar@{--}[ul] \ar@{--}[u] \ar@{--}[ur] & & &
 &\bullet \ar@{..}[r] \ar@{--}[ul] \ar@{--}[u] \ar@{--}[ur] \ar@{->}[llll]^{\edge_{j}} & & & \\
&& &     & & & & \bullet  \ar@{->}[ull]^{\edge_{j_+}}  \ar@{<-}[urr]_{\edge_{j}} &&&&&&&&\\
\\
&& &w_{j_+}  & & &  \bullet \ar@{->}[uur]^{\edge_{j_+}} &  & \bullet \ar@{<-}[uul]_{\edge_{j}} & & w_{j} & & \\
&& & & & \ar@{..}[ur] &  & w_{j+1}=0  & & \ar@{..}[ul] & & \\
}